\documentclass[journal]{IEEEtran}
%
% If IEEEtran.cls has not been installed into the LaTeX system files,
% manually specify the path to it like:
% \documentclass[journal]{../sty/IEEEtran}

\usepackage{amssymb,amsmath,amsthm,enumitem}

\usepackage{graphicx}
\usepackage{caption}
\usepackage{subcaption}
\usepackage{algorithm}
\usepackage{algorithmicx}
\usepackage{booktabs}
\usepackage[labelformat=parens,labelsep=quad,skip=3pt]{caption}
% Some very useful LaTeX packages include:
% (uncomment the ones you want to load)

% *** MISC UTILITY PACKAGES ***
%
%\usepackage{ifpdf}
% Heiko Oberdiek's ifpdf.sty is very useful if you need conditional
% compilation based on whether the output is pdf or dvi.
% usage:
% \ifpdf
%   % pdf code
% \else
%   % dvi code
% \fi
% The latest version of ifpdf.sty can be obtained from:
% http://www.ctan.org/pkg/ifpdf
% Also, note that IEEEtran.cls V1.7 and later provides a builtin
% \ifCLASSINFOpdf conditional that works the same way.
% When switching from latex to pdflatex and vice-versa, the compiler may
% have to be run twice to clear warning/error messages.

% *** CITATION PACKAGES ***
%
%\usepackage{cite}
% cite.sty was written by Donald Arseneau
% V1.6 and later of IEEEtran pre-defines the format of the cite.sty package
% \cite{} output to follow that of the IEEE. Loading the cite package will
% result in citation numbers being automatically sorted and properly
% "compressed/ranged". e.g., [1], [9], [2], [7], [5], [6] without using
% cite.sty will become [1], [2], [5]--[7], [9] using cite.sty. cite.sty's
% \cite will automatically add leading space, if needed. Use cite.sty's
% noadjust option (cite.sty V3.8 and later) if you want to turn this off
% such as if a citation ever needs to be enclosed in parenthesis.
% cite.sty is already installed on most LaTeX systems. Be sure and use
% version 5.0 (2009-03-20) and later if using hyperref.sty.
% The latest version can be obtained at:
% http://www.ctan.org/pkg/cite
% The documentation is contained in the cite.sty file itself.

% *** GRAPHICS RELATED PACKAGES ***
%
\ifCLASSINFOpdf
  % \usepackage[pdftex]{graphicx}
  % declare the path(s) where your graphic files are
  % \graphicspath{{../pdf/}{../jpeg/}}
  % and their extensions so you won't have to specify these with
  % every instance of \includegraphics
  % \DeclareGraphicsExtensions{.pdf,.jpeg,.png}
\else
  % or other class option (dvipsone, dvipdf, if not using dvips). graphicx
  % will default to the driver specified in the system graphics.cfg if no
  % driver is specified.
  % \usepackage[dvips]{graphicx}
  % declare the path(s) where your graphic files are
  % \graphicspath{{../eps/}}
  % and their extensions so you won't have to specify these with
  % every instance of \includegraphics
  % \DeclareGraphicsExtensions{.eps}
\fi
% graphicx was written by David Carlisle and Sebastian Rahtz. It is
% required if you want graphics, photos, etc. graphicx.sty is already
% installed on most LaTeX systems. The latest version and documentation
% can be obtained at: 
% http://www.ctan.org/pkg/graphicx
% Another good source of documentation is "Using Imported Graphics in
% LaTeX2e" by Keith Reckdahl which can be found at:
% http://www.ctan.org/pkg/epslatex
%
% latex, and pdflatex in dvi mode, support graphics in encapsulated
% postscript (.eps) format. pdflatex in pdf mode supports graphics
% in .pdf, .jpeg, .png and .mps (metapost) formats. Users should ensure
% that all non-photo figures use a vector format (.eps, .pdf, .mps) and
% not a bitmapped formats (.jpeg, .png). The IEEE frowns on bitmapped formats
% which can result in "jaggedy"/blurry rendering of lines and letters as
% well as large increases in file sizes.
%
% You can find documentation about the pdfTeX application at:
% http://www.tug.org/applications/pdftex

\usepackage{amsmath}
\usepackage{amssymb}  % assumes amsmath package installed
\usepackage{amsfonts}
\usepackage{mathrsfs}
\usepackage{mathptmx} 
\usepackage{mathtools}

\usepackage{graphicx}
\usepackage{textcomp}
\usepackage{xcolor}
\usepackage{subcaption}
\usepackage{placeins}

\usepackage{algorithm}
\usepackage{algorithmicx}
\usepackage{algpseudocode}

\usepackage{multirow}
\usepackage{subcaption}
\usepackage{csquotes}

\usepackage[hidelinks]{hyperref}

\usepackage{cite}

\newtheorem{assumption}{Assumption}
\newtheorem{lemma}{Lemma}
\newtheorem{remark}{Remark}
\newtheorem{theorem}{Theorem}

\newtheorem{definition}{Definition}
\newtheorem{problem}{Problem}

% correct bad hyphenation here
\hyphenation{}

\begin{document}
%
% paper title
% Titles are generally capitalized except for words such as a, an, and, as,
% at, but, by, for, in, nor, of, on, or, the, to and up, which are usually
% not capitalized unless they are the first or last word of the title.
% Linebreaks \\ can be used within to get better formatting as desired.
% Do not put math or special symbols in the title.
% \title{Data-Driven Motion Planning for Uncertain Multiagent Systems}
\title{SAFE--MA--RRT: Multi-Agent Motion Planning with Data-Driven Safety Certificates}
%Dimensional-aware Finite-time Koopman Model Learning Using Experience-Replay Technique
%Data-driven Finite-Time Koopman Identifier using Experience-Replay Technique

% author names and IEEE memberships
% note positions of commas and nonbreaking spaces ( ~ ) LaTeX will not break
% a structure at a ~ so this keeps an author's name from being broken across
% two lines.
% use \thanks{} to gain access to the first footnote area
% a separate \thanks must be used for each paragraph as LaTeX2e's \thanks
% was not built to handle multiple paragraphs
%

\author{Babak~Esmaeili
        and~Hamidreza~Modares*
\thanks{
This work is supported by the National Science Foundation under award ECCS-2227311. \\
B. Esmaeili and H. Modares are with the Department of Mechanical Engineering, Michigan State University, East Lansing, MI, 48863, USA. (e-mails: esmaeil1@msu.edu and modaresh@msu.edu). }}

% \author{  % stops a space
% % \thanks{M. Shell was with the Department
% % of Electrical and Computer Engineering, Georgia Institute of Technology, Atlanta,
% % GA, 30332 USA e-mail: (see http://www.michaelshell.org/contact.html).}% <-this % stops a space
% % \thanks{J. Doe and J. Doe are with Anonymous University.}% <-this % stops a space
%  \thanks{This work was supported by Ford Motor Company-Michigan State University Alliance.}
% \thanks{
% }
% }

% note the % following the last \IEEEmembership and also \thanks - 
% these prevent an unwanted space from occurring between the last author name
% and the end of the author line. i.e., if you had this:
% 
% \author{....lastname \thanks{...} \thanks{...} }
%                     ^------------^------------^----Do not want these spaces!
%
% a space would be appended to the last name and could cause every name on that
% line to be shifted left slightly. This is one of those "LaTeX things". For
% instance, "\textbf{A} \textbf{B}" will typeset as "A B" not "AB". To get
% "AB" then you have to do: "\textbf{A}\textbf{B}"
% \thanks is no different in this regard, so shield the last } of each \thanks
% that ends a line with a % and do not let a space in before the next \thanks.
% Spaces after \IEEEmembership other than the last one are OK (and needed) as
% you are supposed to have spaces between the names. For what it is worth,
% this is a minor point as most people would not even notice if the said evil
% space somehow managed to creep in.
% make the title area
\maketitle

% As a general rule, do not put math, special symbols or citations
% in the abstract or keywords.
\begin{abstract}
This paper proposes a fully data-driven motion-planning framework for homogeneous linear multi-agent systems that operate in shared, obstacle-filled workspaces without access to explicit system models. Each agent independently learns its closed-loop behavior from experimental data by solving convex semidefinite programs that generate locally invariant ellipsoids and corresponding state-feedback gains. These ellipsoids, centered along grid-based waypoints, certify the dynamic feasibility of short-range transitions and define safe regions of operation. A sampling-based planner constructs a tree of such waypoints, where transitions are allowed only when adjacent ellipsoids overlap, ensuring invariant-to-invariant transitions and continuous safety. All agents expand their trees simultaneously and are coordinated through a space–time reservation table that guarantees inter-agent safety by preventing simultaneous occupancy and head-on collisions. Each successful edge in the tree is equipped with its own local controller, enabling execution without re-solving optimization problems at runtime. The resulting trajectories are not only dynamically feasible but also provably safe with respect to both environmental constraints and inter-agent collisions. Simulation results demonstrate the effectiveness of the approach in synthesizing synchronized, safe trajectories for multiple agents under shared dynamics and constraints, using only data and convex optimization tools.
\end{abstract}

% Note that keywords are not normally used for peerreview papers.
\begin{IEEEkeywords}
Data-Driven Control, Motion Planning, Rapidly-exploring Random Tree, Multi-agent Systems.
\end{IEEEkeywords}

\IEEEpeerreviewmaketitle

\section{Introduction}

\IEEEPARstart{M}{}ultiagent systems (MAS) are used in drone swarms, delivery fleets, warehouse robots, and traffic networks. They promise scalability and robustness by distributing work across many agents in the same workspace. To make this practical, motion planning must produce coordinated, safe, and dynamically feasible trajectories for all agents, while enforcing state limits, avoiding obstacles, and preventing inter-agent collisions on a shared time scale.

Motion planning for autonomous systems has been studied widely, with well-known families including graph search \cite{hart1968astar}, potential fields \cite{khatib1986potential}, optimization-based trajectory methods\cite{zucker2013chomp,schulman2014trajopt}, and sampling-based planners such as probabilistic roadmaps (PRM) \cite{kavraki1996probabilistic} and rapidly-exploring random trees (RRT) \cite{wang2020neural,jiang2021r2}. These methods work well for single agents, but extending them to multiple agents raises a tight coupling in space and time. Agents must reason about each other’s future occupancy to avoid conflicts, often under dynamics and constraints that simple geometric path planners do not capture. Multi-agent path finding (MAPF) addresses collisions on grids using space–time reservations and conflict resolution \cite{silver2005cooperative,sharon2015cbs,boyarski2015icbs,standley2010finding}, while velocity-obstacle methods provide decentralized collision avoidance in continuous spaces \cite{vandenberg2008rvo,vandenberg2011orca}. Still, many multi-agent methods assume kinematic models, decouple planning from feedback, or do not certify that the planned motion is feasible under real dynamics and state bounds.

Invariant–set–based planners certify safety by connecting nodes only when the current node lies inside the next node’s invariant region, so that controller switching preserves safety \cite{weiss2017saferrt,danielson2017path,danielson2020robust,niknejad2024soda}; however, these methods are model-based, typically assuming known linear dynamics to compute ellipsoidal (or polytopic) sets via LMIs. This assumption is limiting when dynamics are uncertain. While set-theoretic control formalizes safety via $\lambda$-contractivity \cite{blanchini1999invariance}, global invariance over complex admissible regions is rarely achievable; the largest invariant subset depends on controller structure and data richness \cite{bisoffi2020data,depersis2021formulas}. Partitioned designs with local controllers can help when a single linear feedback is insufficient \cite{nguyen2023,nguyen2024}, yet they remain model-based, and even though computing invariant ellipsoids is tractable for linear systems, the model dependence reduces practicality when a reliable model is not available \cite{greiff2024invariant}. This motivates data-driven approaches—indirect (identify then control) or direct (synthesize controllers directly from data) \cite{hou2013model,wang2016indirect,bisoffi2022controller}—that enable model-free controller synthesis for LTI systems via convex programs and avoid identification error compounding \cite{modares2023data,depersis2021formulas,vanwaarde2020fundamental}. Recent work has also investigated data-based invariant/safety certificates \cite{greiff2024invariant,bisoffi2020data,depersis2021formulas}. Nevertheless, while single-agent invariant–set–based data-driven motion planners have begun to appear—e.g., a data-driven planner for uncertain systems \cite{esmaeili2025data} and performance-aware planner \cite{niknejad2025dasp}—to the best of our knowledge there is still no multi-agent counterpart. In multi-agent settings, sampling-based planners such as PRM/RRT may still propose waypoints without verifying (purely from data) safe navigability, which motivates integrating data-driven safety guarantees with space–time coordination \cite{karaman2011rrtstar,kavraki1996probabilistic,schouwenaars2001mixed}.

This paper presents \emph{SAFE--MA--RRT}, a 
\textbf{S}afe \textbf{A}nd \textbf{F}easible \textbf{E}llipsoidal framework for \textbf{M}ulti-\textbf{A}gent motion planning via  \textbf{R}apidly-exploring \textbf{R}andom \textbf{T}rees. SAFE--MA--RRT is a data-driven, dynamics-aware motion-planning framework for linear multi-agent systems with discrete-time dynamics. Using one informative trajectory per agent, we compute local state-feedback gains and contractive ellipsoidal invariant sets by solving semidefinite programs that depend only on collected data; these certificates are embedded in a grid-based RRT, where a one-cell move is accepted only if an SDP certifies an invariant ellipsoid inside the local polytope spanning the two cells. To coordinate agents on a shared time grid, we expand all trees synchronously and employ a space–time reservation table to prevent same-cell conflicts and head-on swaps \cite{silver2005cooperative}. Although our coordination stage is inspired by synchronous expansion and reservation tables \cite{silver2005cooperative}, these methods cannot be seamlessly applied to a data-driven invariant-set planner: they do not specify how to identify and manage active invariant ellipses during search, nor how to enforce $\lambda$-contractive switching purely from data. We therefore develop an alternative that (i) ties each reserved space–time cell to its certified child ellipsoid, so the reservation table explicitly tracks the active ellipses that govern execution; (ii) admits a move only when parent/child ellipsoids overlap, ensuring safe controller switching; and (iii) adjusts untrackable nominal nodes to nearby points inside the child ellipse before commit, preserving dynamics awareness without additional model assumptions.

The result is a planner that grows all agent trees in synchronized layers and enforces safety at two levels: workspace safety via invariant-set containment in local polytopes and inter-agent safety via space–time reservations. In contrast to geometric MAPF, every accepted edge carries its own feedback law and invariant set, ensuring dynamic feasibility and constraint admissibility at execution time; in contrast to model-based invariant-set planners, our gains and certificates are learned directly from data \cite{depersis2021formulas,vanwaarde2020fundamental}. Simulations with single- and two-agent scenarios show that the method certifies each motion step with data-driven ellipsoids and produces coordinated paths that obey state limits. The framework offers a practical bridge between sampling-based planning and data-driven control, and sets the stage for probabilistic safety under noise and for extensions to heterogeneous and nonlinear multi-agent systems.

\noindent\textbf{Notation:}
We denote by $\mathbb{R}$ the set of real numbers, by $\mathbb{R}^n$ the $n$-dimensional real vector space, and by $\mathbb{N}_0=\{0,1,2,\dots\}$ the set of nonnegative integers. The $n\times n$ identity matrix is denoted by $I_n$, and $\mathbf{0}_n$ denotes the $n\times n$ zero matrix.
For a matrix $A$, $A_i$ is its $i$-th row and $A_{ij}$ the $(i,j)$ entry. If $A$ and $B$ have the same size, the symbol
$A(\le,\ge)B$ is to be understood elementwise, i.e.,
$A_{ij}(\le,\ge)B_{ij}$ for all $i,j$.
For any matrix $A$, $A^\top$ denotes its transpose. In block-symmetric expressions, the symbol $(\ast)$ indicates the
transpose block needed to complete symmetry.
For a symmetric matrix $Q$, the relations $Q\succeq 0$ and $Q\preceq 0$ mean that $Q$ is positive semidefinite and negative semidefinite, respectively. Given a set $\mathcal S$ and a scalar $\mu\ge 0$, the scaled set
$\mu\mathcal S:=\{\mu x:\,x\in\mathcal S\}$.

A directed graph is a pair $G=(V,E)$ with a finite vertex set $V$ and a
set of ordered pairs $E\subseteq V\times V$ called directed edges. An
edge $(u,v)\in E$ points from tail $u$ to head $v$ and encodes the
permitted direction of travel. A (directed) path is a sequence of
vertices $(v_0,v_1,\dots,v_\ell)$ such that $(v_{k-1},v_k)\in E$ for
all $k=1,\dots,\ell$. Graph-search procedures seek a path between
designated vertices while satisfying any imposed constraints.

% \noindent \textbf{Notations}: Throughout this paper, the Kronecker product is represented by $\otimes$, while the Khatri-Rao product, which is the column-wise Kronecker product of two matrices with the same number of columns, is denoted by $\odot$. The identity matrix of size $n \times n$ is represented by $I_n$, while $\textbf{0}n$ indicates the $n \times n$ zero matrix. For any matrix $A$, $A_i$ refers to its $i$-th row, and $A{ij}$ denotes the element located at the $i$-th row and $j$-th column of $A$. When matrices (or vectors) $A$ and $B$ have the same dimensions, the notation $A (\leq, \geq) B$ indicates a componentwise inequality, i.e., ${A_{ij}} (\leq, \geq) {B_{ij}}$ for all $i$ and $j$. The trace of a matrix $A$, its largest eigenvalue, and its transpose are denoted by $\operatorname{Tr}(A)$, $\lambda_{max}(A)$, and $A^\top$, respectively. The spectral norm and Frobenius norm of $A$ are represented as $|A|_2$ and $|A|_F$, respectively. In symmetric matrices, the symbol ($*$) is used to denote the transpose of a corresponding element, ensuring the matrix's symmetry. For a matrix $Q$, the notation $Q (\preceq, \succeq) 0$ indicates that $Q$ is negative or positive semi-definite, respectively. Given a set $\mathcal{S}$ and a non-negative scalar $\mu$, the set $\mu \mathcal{S}$ consists of all elements $\mu x$ such that $x \in \mathcal{S}$.

\smallskip
The following definitions are used to describe admissible and safe sets in this paper.

% \begin{definition}\label{def_1}
% \cite{blanchini2008set} A C-set is a set that is both convex and compact, with the origin included in its interior.
% \end{definition}

\begin{definition}\label{def_2}
A polytope is the intersection of a finite number of half-spaces and is expressed as 
\begin{equation}\label{eq.polyhedral_set}
    \mathcal{S} = \{ x \in \mathbb{R}^n \mid F x \leq g \},
\end{equation}
where $F \in \mathbb{R}^{q \times n}$ and $g \in \mathbb{R}^q$ define the constraints of the polytope.
\end{definition}

\begin{definition}\label{def_3}
An ellipsoidal set, denoted by $\mathcal{E}(P, c)$, is defined as  
\begin{equation}
    \mathcal{E}(P, c) := \{ x \in \mathbb{R}^n \mid (x - c)^\top P^{-1} (x - c) \leq 1 \},
\end{equation}
where $P \in \mathbb{R}^{n \times n}$ is a symmetric positive definite matrix, and $c \in \mathbb{R}^n$ represents the center of the ellipsoid.
\end{definition}

\begin{definition}\label{definition_6} 
(\textbf{Admissible Set}): An admissible set represents the collection of states that satisfy the system's physical and operational constraints. \end{definition}

\begin{definition}\label{definition_7} 
(\textbf{Safe Set}): A safe set is a subset of the admissible set that remains invariant under the system dynamics. If the system starts within the safe set, it will stay within the set for all future time steps under certain conditions. 
\end{definition}

\begin{definition}[{\(\lambda\)-Contractive and Positive Invariant Sets}]\label{definition_8}
For the system~\eqref{eq:lti_agent_i}, let $\lambda \in (0,1]$.  
An ellipsoidal set $\mathcal{E}(P,c)$ is called $\lambda$-contractive if, for every $x(k) \in \mathcal{E}(P,c)$, it holds that
\[
x(k+1) \in \lambda \mathcal{E}(P,c).
\]

In the special case $\lambda = 1$, the set $\mathcal{E}(P,c)$ is called positively invariant.
\end{definition}

\section{Preliminaries and Problem Formulation}
The multi–agent system consists of \(N_a\) homogeneous agents, each described by the same linear time‑invariant (LTI) dynamics.  
For the \(i\)-th agent, \(i\in\{1,\dots,N_a\}\), the discrete‑time state‑space model is  
\begin{align}\label{eq:lti_agent_i}
x_i(k+1) &= A\,x_i(k) + B\,u_i(k), \\
y_i(k)   &= C\,x_i(k),
\end{align}
where \(x_i(k)\in\mathbb{R}^{n}\) and \(u_i(k)\in\mathbb{R}^{m}\) denote, respectively, the state and control input of agent \(i\) at the sampling instant \(k\).  
Because the agents are homogeneous, the pair \((A,B)\) is identical for every agent.  
The output matrix \(C\) is chosen such that \(y_i(k)\in\mathbb{R}^2\) represents the planar position of agent \(i\).

\begin{assumption}\label{assump:unknown_matrices}
The exact system matrices $(A,B,C)$ are unknown. However, the full state vector $x(k)$ is available for measurement.
\end{assumption}

\begin{assumption}\label{assump:controllability}
The actual system defined by $(A,B)$ is controllable.
\end{assumption}

\begin{assumption}\label{ass:Y_union}
Consider a team of $N_a$ homogeneous agents.
For every agent index $i\in \{1,\dots,N_a\}$ and discrete time $k\in\mathbb N_0$,
the position output $y_i(k)\in\mathbb R^{2}$ is required to satisfy
\begin{equation}
        y_i(k)\in\mathcal Y ,
\end{equation}
where the admissible position set $\mathcal Y\subseteq\mathbb R^{2}$ 
is (possibly) non-convex but admits a finite union–of–polytopes representation
\begin{equation}
        \mathcal Y
        \;=\;
        \bigcup_{\kappa\in\mathcal I_{\mathcal Y}}
        \mathcal Y_{\kappa},
        \qquad
        |\mathcal I_{\mathcal Y}|<\infty .
\end{equation}

Each polytope $\mathcal Y_{\kappa}\subseteq\mathbb R^{2}$ is compact and described by linear inequalities
\begin{equation}
        \mathcal Y_{\kappa}
        =
        \bigl\{\,y\in\mathbb R^{2}
        \;\bigl|\;
        F_{\mathcal Y_\kappa}(l)^{\top}y\le g_{\mathcal Y_\kappa}(l),
        \;l=1,\dots,n_f^{\kappa}\bigr\}.
\end{equation}
\end{assumption}

Each agent must move from its prescribed start position
$y_{i,\mathrm s}\in\mathcal Y$ to a goal position
$y_{i,\mathrm g}\in\mathcal Y$ while satisfying:

\noindent\textit{(i) workspace safety:}
$y_{i}(k)\in\mathcal Y$ for all $k\in\mathbb N_0$ and all $i$;

\noindent\textit{(ii) pairwise separation:}
for all $i \neq j$ and all $k \in \mathbb{N}_0$, 
the trajectories of agents $i$ and $j$ remain in their respective 
invariant safe sets, and these sets do not overlap.

\noindent\textit{(iii) goal reachability:}
there exists $\varepsilon>0$ such that
$\displaystyle\limsup_{k\to\infty}
\|y_{i}(k)-y_{i,\mathrm g}\|_2 \le \varepsilon$.

\subsection{Data‑Driven Representation for Each Agent}\label{sec:data_driven}
To eliminate any dependence on the unknown model matrices \(A\) and \(B\),
we represent the closed-loop behaviour of every agent directly from a
short, informative input–state trajectory.  The basic idea is simple:
excite the system with a finite sequence of inputs, log the
corresponding states, and use those data alone to reconstruct the map
that propagates the state one step ahead.  Once this map is available,
we can synthesise a feedback gain without ever identifying
\(A\) or \(B\), thereby making the subsequent safety-certificate
construction entirely model-free and scalable to large multi-agent
teams.

For the multi-agent setting, consider each agent $i \in \{1,\dots,N_a\}$ applying an input sequence
$[u_i(0),\,u_i(1),\,\ldots,\,u_i(N-1)]$
to its own system. During this process, $N$ consecutive state samples are collected and arranged into data matrices:
\begin{align}
& U_{0,i} = [u_i(0),\,u_i(1),\,\ldots,\,u_i(N{-}1)], \label{eq:MAS_Data_U0} \\
& X_{0,i} = [x_i(0),\,x_i(1),\,\ldots,\,x_i(N{-}1)], \label{eq:MAS_Data_X0} \\
& X_{1,i} = [x_i(1),\,x_i(2),\,\ldots,\,x_i(N)], \label{eq:MAS_Data_X1} \\
& Y_{0,i} = [y_i(0),\,y_i(1),\,\ldots,\,y_i(N-1)].
\label{eq:MAS_Data_Y0}
\end{align}

These matrices are then used to construct the local invariant-set certificates and feedback gains for each agent in the simultaneous RRT--Ellipse procedure.

\begin{assumption}[Persistently‑Exciting Data]\label{stack_full_rank}
For every agent $i \in N_a$, the stacked matrix
\begin{equation}\label{eq.rank}
\begin{bmatrix} U_{0,i}\\[2pt] X_{0,i}\end{bmatrix}\in\mathbb R^{(m+n)\times N}
\end{equation}
has full row rank and $N_i\ge m+n$.
\end{assumption}

% \begin{lemma}\label{lem:G_factorisation}
% Let Assumption 3 hold.  
% Then, for each agent $i$ there exist matrices
% $K_i\in\mathbb R^{m\times n}$ and
% $G_i\in\mathbb R^{N_i\times(m+n)}$ such that
% \begin{equation}\label{eq:KG_factorisation}
% \begin{bmatrix}
%       K_i & I_m\\[2pt] I_n & 0
% \end{bmatrix}
% \;=\;
% \begin{bmatrix}
%       U_{0,i}\\[2pt] X_{0,i}
% \end{bmatrix}G_i.
% \end{equation}
% Partition $G_i$ as $G_i=[\,G_{1,i}\;\;G_{2,i}\,]$ with
% $G_{1,i}\in\mathbb R^{N_i\times n}$ and $G_{2,i}\in\mathbb R^{N_i\times m}$.
% Under the state‑feedback law
% \begin{equation}\label{eq:data_gain}
%       u_i(k)=K_i\,x_i(k),
% \end{equation}
% the closed‑loop dynamics and input matrix satisfy
% \begin{equation}\label{eq:data_cl_dynamics}
%       x_i(k+1)=X_{1,i}\,G_{1,i}\,x_i(k),\qquad
%       B = X_{1,i}\,G_{2,i}.
% \end{equation}
% \end{lemma}

\begin{lemma}\label{lem:G_factorisation}
Let Assumption 3 hold.  
Then, for each agent $i$ there exist matrices
$K_i\in\mathbb R^{m\times n}$ and
$G_i\in\mathbb R^{N\times(m+n)}$ such that
\begin{equation}\label{eq:KG_factorisation}
\begin{bmatrix}
      K_i & I_m\\[2pt] I_n & 0
\end{bmatrix}
\;=\;
\begin{bmatrix}
      U_{0,i}\\[2pt] X_{0,i}
\end{bmatrix}G_i.
\end{equation}
Partition $G_i$ as $G_i=[\,G_{1,i}\;\;G_{2,i}\,]$ with
$G_{1,i}\in\mathbb R^{N\times n}$ and $G_{2,i}\in\mathbb R^{N_i\times m}$.
Under the state-feedback law
\begin{equation}\label{eq:data_gain}
      u_i(k)=K_i\,x_i(k),
\end{equation}
the closed-loop dynamics and input/output matrices satisfy
\begin{align}
    x_i(k+1) &= X_{1,i}\,G_{1,i}\,x_i(k), \label{eq:data_cl_dynamics_x}\\
    B &= X_{1,i}\,G_{2,i}, \label{eq:data_cl_dynamics_B}\\
    C &= Y_{0,i}\,G_{1,i}. \label{eq:data_cl_dynamics_C}
\end{align}
\end{lemma}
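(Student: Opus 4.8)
The plan is to turn the persistent-excitation hypothesis into a right inverse of the stacked data matrix and then to read off the three closed-loop identities directly from the exact (though unknown) data-generating relations. Throughout I write $\Phi_i := \begin{bmatrix} U_{0,i}\\ X_{0,i}\end{bmatrix}\in\mathbb{R}^{(m+n)\times N}$ for the stacked regressor. Two facts drive the argument. First, by the persistently-exciting data assumption (Assumption~\ref{stack_full_rank}), $\Phi_i$ has full row rank $m+n$ with $N\ge m+n$, so $\Phi_i\Phi_i^\top$ is invertible and $\Phi_i^\dagger := \Phi_i^\top(\Phi_i\Phi_i^\top)^{-1}$ is a right inverse satisfying $\Phi_i\Phi_i^\dagger=I_{m+n}$. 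Second, since the samples obey~\eqref{eq:lti_agent_i}, stacking $x_i(k+1)=Ax_i(k)+Bu_i(k)$ and $y_i(k)=Cx_i(k)$ columnwise gives the exact identities $X_{1,i}=[\,B\;\;A\,]\Phi_i$ and $Y_{0,i}=CX_{0,i}$, which hold even though $A,B,C$ are unavailable.

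Next I would construct $G_i$ one block column at a time. Setting $G_{2,i}:=\Phi_i^\dagger\begin{bmatrix}I_m\\0\end{bmatrix}$ yields $\Phi_iG_{2,i}=\begin{bmatrix}I_m\\0\end{bmatrix}$, i.e. $U_{0,i}G_{2,i}=I_m$ and $X_{0,i}G_{2,i}=0$. Choosing any target gain $K_i$ and setting $G_{1,i}:=\Phi_i^\dagger\begin{bmatrix}K_i\\I_n\end{bmatrix}$ yields $\Phi_iG_{1,i}=\begin{bmatrix}K_i\\I_n\end{bmatrix}$, i.e. $U_{0,i}G_{1,i}=K_i$ and $X_{0,i}G_{1,i}=I_n$. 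The dimensions match the stated partition ($G_{1,i}\in\mathbb{R}^{N\times n}$, $G_{2,i}\in\mathbb{R}^{N\times m}$), and concatenating $G_i=[\,G_{1,i}\;\;G_{2,i}\,]$ reproduces exactly the block identity~\eqref{eq:KG_factorisation}, establishing the existence claim.

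I would then verify the three consequences by substituting the data-generating identities into these constructed blocks. Using $X_{1,i}=[\,B\;\;A\,]\Phi_i$ gives $X_{1,i}G_{1,i}=[\,B\;\;A\,]\begin{bmatrix}K_i\\I_n\end{bmatrix}=A+BK_i$, which is precisely the closed-loop state matrix, so under~\eqref{eq:data_gain} we have $x_i(k+1)=(A+BK_i)x_i(k)=X_{1,i}G_{1,i}x_i(k)$, proving~\eqref{eq:data_cl_dynamics_x}; similarly $X_{1,i}G_{2,i}=[\,B\;\;A\,]\begin{bmatrix}I_m\\0\end{bmatrix}=B$, giving~\eqref{eq:data_cl_dynamics_B}. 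For the output matrix, $Y_{0,i}G_{1,i}=CX_{0,i}G_{1,i}=C\,I_n=C$, where the final step invokes $X_{0,i}G_{1,i}=I_n$ established above, which is~\eqref{eq:data_cl_dynamics_C}.

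The argument contains no genuinely hard step; the only point requiring care is consistency, namely that a single $G_i$ must satisfy all four scalar block equations at once. The reason this is not over-determined is that the two bottom block rows impose $X_{0,i}[\,G_{1,i}\;\;G_{2,i}\,]=[\,I_n\;\;0\,]$ while the top-left block merely \emph{defines} $K_i:=U_{0,i}G_{1,i}$ rather than constraining it, so no conflict arises among the columns produced by the right inverse. I would also emphasize that $K_i$ is in fact a free design parameter: every desired gain is realized by the corresponding $G_{1,i}$, which is exactly what later lets the semidefinite program search over $G_i$ in place of searching over closed-loop gains without any model knowledge.
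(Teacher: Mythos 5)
Your proof is correct and follows essentially the same route as the paper's: both use the full-row-rank assumption to obtain a right inverse of the stacked data matrix, partition it as $G_i=[\,G_{1,i}\;\;G_{2,i}\,]$, and derive the three identities by multiplying the data relations $X_{1,i}=AX_{0,i}+BU_{0,i}$ and $Y_{0,i}=CX_{0,i}$ by the corresponding blocks. The only cosmetic difference is that you construct the right inverse explicitly via $\Phi_i^\top(\Phi_i\Phi_i^\top)^{-1}$, whereas the paper merely invokes its existence; the verification steps are identical.
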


% \begin{proof}
% Because the matrix in Assumption~\ref{stack_full_rank} is full
% row rank, a right inverse $G_i$ exists that fulfils
% \eqref{eq:KG_factorisation}; its first $m$ rows define $K_i$.
% Using $X_{1,i}=A\,X_{0,i}+B\,U_{0,i}$ and premultiplying by
% $G_{1,i}$ yields
% $A+BK_i=X_{1,i}G_{1,i}$, which gives the first equality
% in~\eqref{eq:data_cl_dynamics}. Premultiplying instead by $G_{2,i}$
% gives $B=X_{1,i}G_{2,i}$, completing the proof.
% \end{proof}

\begin{proof}
Since the stacked data matrix in Assumption~\ref{stack_full_rank}
is full row rank, there exists a right inverse
$G_i \in \mathbb R^{N\times(m+n)}$ such that
\begin{equation}\label{eq:Gi}
\begin{bmatrix}
  U_{0,i}\\[2pt] X_{0,i}
\end{bmatrix} G_i
=
\begin{bmatrix}
  K_i & I_m\\[2pt] I_n & 0
\end{bmatrix}.
\end{equation}

Partition $G_i$ as $G_i = [\,G_{1,i}\;\;G_{2,i}\,]$ with
$G_{1,i}\in\mathbb R^{N\times n}$ and
$G_{2,i}\in\mathbb R^{N\times m}$.

From the system dynamics \eqref{eq:lti_agent_i} and collected data, we have
\begin{equation}\label{eq:X1}
X_{1,i} = A X_{0,i} + B U_{0,i}.
\end{equation}

We multiply both sides by $G_{1,i}$.  
Using~\eqref{eq:Gi}, we know that
\[
X_{0,i} G_{1,i} = I_n, \qquad
U_{0,i} G_{1,i} = K_i.
\]

Therefore,
\begin{equation}
X_{1,i} G_{1,i} = AX_{0,i} G_{1,i} + B U_{0,i} G_{1,i}
= A + B K_i.
\end{equation}

This gives the closed-loop dynamics
\begin{equation}
x_i(k+1) = (A+BK_i)x_i(k) = X_{1,i} G_{1,i} x_i(k),
\end{equation}
which proves the equality in \eqref{eq:data_cl_dynamics_x}.

Multiplying~\eqref{eq:X1} by $G_{2,i}$ and using~\eqref{eq:Gi}, we have
\[
X_{0,i} G_{2,i} = 0, \qquad
U_{0,i} G_{2,i} = I_m,
\]
so
\begin{equation}
X_{1,i} G_{2,i} = AX_{0,i} G_{2,i} + B U_{0,i} G_{2,i}
= B.
\end{equation}

Thus, the equality in \eqref{eq:data_cl_dynamics_B} follows.

By definition of the measured data, we have
\begin{equation}
Y_{0,i} = C X_{0,i}.
\end{equation}

Multiplying both sides by $G_{1,i}$ and recalling that
$X_{0,i} G_{1,i} = I_n$, it follows that
\begin{equation}
Y_{0,i} G_{1,i} = C X_{0,i} G_{1,i} = C.
\end{equation}

This proves the equality in \eqref{eq:data_cl_dynamics_C}.
\end{proof}

% \subsection{Error‑Dynamics Formulation (per agent).}
% Fix a constant reference \(r_i\in\mathbb R^{2}\) and define the error state
% \(e_i(k):=x_i(k)-r_i\).
% Apply the affine feedback law
% \begin{equation}\label{eq.controller}
% u_i(k)=K_i\bigl(x_i(k)-r_i\bigr)+\bar u_i,
% \end{equation}
% where \(\bar u_i\in\mathbb R^{m}\) is chosen so that  
% \(r_i\) is an equilibrium, i.e.,
% \((A+BK_i)r_i+B\bar u_i=r_i\).
% Substituting the control input into the plant
% \(x_i(k+1)=A\,x_i(k)+B\,u_i(k)\) and eliminating \(A+BK_i\) with its
% data‑driven expression \(\Phi_i:=X_{1,i}G_{1,i}\) (see
% Lemma~\ref{lem:G_factorisation}) yields
% \begin{equation}\label{eq:error_dynamics_i}
%     e_i(k+1)=\Phi_i\,e_i(k),\qquad
%     \Phi_i := X_{1,i}G_{1,i}.
% \end{equation}
% Hence the origin in the error coordinates evolves under the matrix
% \(\Phi_i\) obtained directly from data, and no explicit knowledge of
% \(A\) or \(B\) is required.

% The next section uses the dynamics \eqref{eq:error_dynamics_i} to
% construct a safe invariant set for each agent.

\subsection{Feedforward Control Design}
Fix a constant output reference $r_i \in \mathbb R^{2}$.  
To track $r_i$, we first compute a steady-state pair
$(\bar x_i,\bar u_i)$ satisfying
\begin{align}\label{eq:steady_state}
\bar x_i &= A \bar x_i + B \bar u_i, \\
r_i &= C \bar x_i. \label{eq.r_i}
\end{align}

This ensures that when $x_i(k)=\bar x_i$ and $u_i(k)=\bar u_i$, the
system remains at equilibrium with output $y_i(k)=r_i$.

By defining $T:=\begin{bmatrix}
A-I & B \\[3pt]
C & 0
\end{bmatrix}$, equations \eqref{eq:steady_state} and \eqref{eq.r_i} can be written compactly as
\begin{equation}\label{eq:steady_matrix}
T \begin{bmatrix}
\bar x_i \\[3pt] \bar u_i
\end{bmatrix}
=
\begin{bmatrix}
0 \\[3pt] r_i
\end{bmatrix},
\end{equation}
so that the steady-state pair $(\bar x_i,\bar u_i)$ is obtained by
solving the linear system
\begin{equation}\label{eq.steady_state}
\begin{bmatrix}
\bar x_i \\ \bar u_i
\end{bmatrix}
=
T^{\!\!-1}
\begin{bmatrix}
0 \\ r_i
\end{bmatrix}.
\end{equation}

\begin{assumption}\label{assumption_10}
The matrix $T$ is invertible.
\end{assumption}

Since $(A,B,C)$ are not known explicitly, they are replaced with their
data-driven counterparts given in Lemma~\ref{lem:G_factorisation}.
Thus, $(\bar x_i,\bar u_i)$ can be computed directly from measured
trajectories without requiring a model of the plant as follows
\begin{equation}\label{eq.equilibrium_calculation}
\begin{bmatrix}
\bar x_i \\ \bar u_i
\end{bmatrix}
=
\left(
\underbrace{\begin{bmatrix}
X_{1,i}(I-G_{2,i}U_{0,i})G_{1,i} & X_{1,i}G_{2,i} \\[3pt]
Y_{0,i}G_{1,i} & 0
\end{bmatrix}}_{\hat T_i}
\right)^{-1}
\begin{bmatrix}
0 \\ r_i
\end{bmatrix}.
\end{equation}

\begin{assumption}\label{assumption_11}
The matrix $\hat{T}_i$ is invertible for $i\in\{1,\dots,N_a\}$.
\end{assumption}

\subsection{Error-Dynamics Formulation (per agent)}
Defining the error state
\begin{equation}\label{eq:error_state}
e_i(k) := x_i(k) - \bar x_i,
\end{equation}
and applying the
affine feedback law
\begin{equation}\label{eq.controller}
u_i(k)=K_i\bigl(x_i(k)-\bar x_i\bigr)+\bar u_i,
\end{equation}
the closed-loop error dynamics follow as
\begin{equation}\label{eq:error_dynamics_i}
e_i(k+1) = X_{1,i}G_{1,i}e_i(k),
\end{equation}
which evolve directly under the data-driven closed-loop matrix $X_{1,i}G_{1,i}$.

\subsection{Problem Formulation}
The following problem statement summarizes the safe multi-agent motion-planning and control task studied in this work.

\begin{problem}[Data-Driven Multi-Agent Motion Planning]\label{prob:multi_agent_motion}
Consider the LTI dynamics~\eqref{eq:lti_agent_i}, the admissible output set $\mathcal{Y}$ and agent-specific start/goal positions $(y_{i,\mathrm s}, y_{i,\mathrm g})$.  
For each agent $i$, let $\mathcal{P}^{\mathrm{cand}}_i = \{q_{i,j}\}_{j=1}^{N_c^{(i)}} \subset \mathcal{Y}$ denote a finite set of candidate waypoints.  

The objective is to design:
\begin{itemize}
    \item a feasible sequence of waypoints 
    $\{p_{i,\ell}\}_{\ell=0}^{N_w^{(i)}} \subseteq \mathcal{P}^{\mathrm{cand}}_i$,
    with $p_{i,0} = y_{i,\mathrm s}$ and $p_{i,N_w^{(i)}}$ lying within an $\varepsilon$-neighborhood of $y_{i,\mathrm g}$;
    \item a corresponding sequence of steady-state pairs $(\bar{x}_{i,\ell}, \bar{u}_{i,\ell})$ and low-level affine state-feedback controllers
    \begin{equation}\label{eq.controller_final}
        u_i(k) = K_{i,\ell}\bigl(x_i(k) - \bar{x}_{i,\ell}\bigr) + \bar{u}_{i,\ell},
    \end{equation}
    valid whenever the segment $(p_{i,\ell}\!\rightarrow p_{i,\ell+1})$ is active, ensuring safe transitions between successive waypoints.
\end{itemize}
The design must guarantee that requirements \textit{(i)}–\textit{(iii)} are satisfied for every agent $i$ and all $k \in \mathbb{N}_0$.
\end{problem}

The steps to address Problem~\ref{prob:multi_agent_motion} are outlined as follows:

\begin{enumerate}
    \item Discretize the admissible output space $\mathcal{Y}$ into a grid.  
    Select a set of candidate waypoints $\mathcal{P}^{\mathrm{cand}}_i = \{q_{i,j}\}_{j=1}^{N_c^{(i)}}$
    from this grid and compute the corresponding equilibrium pairs $(\bar{x}_{i,j}, \bar{u}_{i,j})$ using the steady-state equation~\eqref{eq.steady_state}.  

    \item For each candidate waypoint $q_{i,j}$, solve a data-driven feedback design problem to obtain a stabilizing gain $K_{i,j}$ together with an invariant ellipsoidal safe set $\mathcal{E}_{i,j}$ that incorporates safety constraints.  

    \item Check connectivity: if $\mathcal{E}_{i,j}$ overlaps with at least one previously accepted invariant set of the same agent and does not intersect the active ellipsoids of other agents (ensuring inter-agent separation), then store $(\bar{x}_{i,j}, \bar{u}_{i,j}, K_{i,j}, \mathcal{E}_{i,j})$ as a node in the waypoint graph $\mathcal{G}$. Otherwise, discard the candidate waypoint.  

    \item Repeat steps 1–3 until a feasible sequence of selected waypoints 
    $
       \{p_{i,\ell}\}_{\ell=0}^{N_w^{(i)}} \subseteq \mathcal{P}^{\mathrm{cand}}_i$
    is obtained, where the associated invariant ellipsoids overlap and connect the start point $y_{i,\mathrm{s}}$ to an $\varepsilon$-neighborhood of the goal $y_{i,\mathrm{g}}$.  
\end{enumerate}

In the next section, we present a data-driven method to synthesize the required state-feedback gains.

% \section{Data-Driven Safe Feedback Control Synthesis}\label{sec:inv_set}
% The goal of this section is to compute, for each agent, an affine
% feedback controller \eqref{eq.controller} such that the error dynamics
% \eqref{eq:error_dynamics_i} keep the state inside a prescribed polyhedral admissible set \(\mathcal P=\{x\mid F\,x\le g\}\).  We assume, without loss of generality, that \(\mathcal P\) is centred at the origin; otherwise a
% coordinate shift completes the reduction.

\section{Data-Driven Safe Feedback Control Synthesis}\label{sec:inv_set}
The objective of this section is to compute, for each agent~$i$, an affine state-feedback controller of the form~\eqref{eq.controller} such that the closed-loop system states remain within a prescribed admissible region.  

Without loss of generality, we first consider a regulation problem with the equilibrium pair $(\bar x_i,\bar u_i)=(0,0)$. In this case, the controller reduces to a static gain $K_i \in \mathbb{R}^{m\times n}$, i.e., $u_i(k) = K_i x_i(k)$. The design task is to synthesize a gain $K_i$ that ensures safety. To this end, the gain $K_i$ is parameterized by means of a $\lambda$-contractive ellipsoidal safe set $\mathcal{E}_i(P_i,0)$, which is constrained to lie within a polyhedral admissible set
\begin{equation}
    \mathcal{P}^i = \{\,e \mid F_i\,e \le g_i\,\}.
\end{equation}

The same design procedure is applied for each waypoint index $\ell$, 
yielding gains $K_{i,\ell}$ associated with the corresponding 
ellipsoidal safe set $\mathcal{E}_{i,\ell}$. By working in error coordinates $e_i(k)$, the admissible set 
$\mathcal{P}_i$ is centered at the origin by construction.

We now formalize the control objective in the following problem statement.

\begin{problem}[Data-Driven Safe Control for Agent $i$]\label{prob:agent_safe}
Consider the dynamics of agent $i$ described in \eqref{eq:lti_agent_i} under Assumptions~1--4 and 6.  
The admissible set of agent $i$ is denoted by $\mathcal{P}_i(F_i,g_i)$, and we represent the safe set by the ellipsoid $\mathcal{E}_i(P_i,0)$. The objective is to determine a data-driven state-feedback gain $K_i$ such that the ellipsoid $\mathcal{E}_i(P_i,0)$ is invariant and $\lambda$-contractive within the admissible set $\mathcal{P}_i$.
\end{problem}

\begin{theorem}[Safe Controller Design for Each Agent]\label{thm:sdp_invariance_MAS}
Let agent \(i\in\{1,\dots,N_a\}\) satisfy Assumptions 1–-4 and 6 and let the data
matrices \((U_{0,i},X_{0,i},X_{1,i},Y_{0,i})\) be organized as in
\eqref{eq:MAS_Data_U0}–-\eqref{eq:MAS_Data_Y0}. If there exist matrices
\(P_i\in\mathbb S^{n}\), \(S_i\succeq0\), and \(G_{2,i}\) that solve
\begin{subequations}\label{eq:opt_MAS}
\begin{align}
\max_{P_i,S_i}\;& -\log\det(P_i) \tag{\ref{eq:opt_MAS}a}\label{eq:opt_obj_MAS}\\
\mathrm{s.t.}\;&
\begin{bmatrix}
      P_i & X_{1,i}S_i \\[2pt]
      *   & \lambda P_i
\end{bmatrix}\succeq0,\tag{\ref{eq:opt_MAS}b}\label{eq:opt_con1_MAS}\\
&\begin{bmatrix}
      P_i & P_iF_{r}^{i^{\!\top}} \\[2pt]
      *   & g_{r}^{i^2}
\end{bmatrix}\succeq0,\quad r=1,\dots,q,\tag{\ref{eq:opt_MAS}c}\label{eq:opt_con2_MAS}\\
&X_{0,i}S_i=P_i,\tag{\ref{eq:opt_MAS}d}\label{eq:opt_con3_MAS}\\
&X_{0,i}G_{2,i}=0,\;U_{0,i}G_{2,i}=I_m,\tag{\ref{eq:opt_MAS}e}\label{eq:opt_con4_MAS}
\end{align}
\end{subequations}
then Problem 2 is solved, and the ellipsoid $\mathcal E_i(P_i,0):=\bigl\{e\in\mathbb R^{n}\,\bigl|\,e^{\!\top}P_i^{-1}e
   \le1\bigr\}$
is the largest contractive subset of the admissible polytope
\(\mathcal P_i\) for the error dynamics in
Eq.\,\eqref{eq:error_dynamics_i}.  
Moreover, the feedback gain is
\begin{equation}\label{eq:gain_MAS_final}
   K_i = U_{0,i}S_iP_i^{-1}.
\end{equation}
\end{theorem}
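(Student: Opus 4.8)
The plan is to reduce the data-dependent semidefinite program to the classical LMI certificate for a contractive, constraint-admissible ellipsoid, using Lemma~\ref{lem:G_factorisation} to translate the raw data blocks into the closed-loop matrix $A+BK_i$. The linchpin is a change of variables: given any feasible triple $(P_i,S_i,G_{2,i})$ with $P_i\succ0$, I would define $G_{1,i}:=S_iP_i^{-1}$ and $K_i:=U_{0,i}S_iP_i^{-1}$, the latter being exactly \eqref{eq:gain_MAS_final}. Constraint \eqref{eq:opt_con3_MAS}, namely $X_{0,i}S_i=P_i$, then yields $X_{0,i}G_{1,i}=X_{0,i}S_iP_i^{-1}=I_n$, while by construction $U_{0,i}G_{1,i}=K_i$; these are precisely the factorization hypotheses of Lemma~\ref{lem:G_factorisation}. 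Hence that lemma gives $X_{1,i}G_{1,i}=A+BK_i$, so the off-diagonal block of \eqref{eq:opt_con1_MAS} satisfies $X_{1,i}S_i=X_{1,i}G_{1,i}P_i=(A+BK_i)P_i$. Constraint \eqref{eq:opt_con4_MAS} merely pins $G_{2,i}$ to the companion factor so that $B=X_{1,i}G_{2,i}$ is recovered (for the feedforward stage); it is not needed for the contractivity argument.

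With this identification, I would establish $\lambda$-contractivity by a Schur-complement reduction of \eqref{eq:opt_con1_MAS}. Writing $A_{\mathrm{cl}}:=A+BK_i$, the block inequality $\begin{bmatrix}P_i & A_{\mathrm{cl}}P_i\\ (\ast) & \lambda P_i\end{bmatrix}\succeq0$ is, by taking the Schur complement with respect to the $(1,1)$ block $P_i\succ0$, equivalent to $\lambda P_i-P_iA_{\mathrm{cl}}^\top P_i^{-1}A_{\mathrm{cl}}P_i\succeq0$, i.e.\ after congruence by $P_i^{-1}$, to $A_{\mathrm{cl}}^\top P_i^{-1}A_{\mathrm{cl}}\preceq\lambda P_i^{-1}$. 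Taking $V(e)=e^\top P_i^{-1}e$ as a Lyapunov function for the error dynamics \eqref{eq:error_dynamics_i}, this reads $V(A_{\mathrm{cl}}e)\le\lambda V(e)$, so every $e$ with $V(e)\le1$ maps to a successor with $V(A_{\mathrm{cl}}e)\le\lambda\le1$. Since $\lambda\le1$ this delivers positive invariance, and the geometric decrease is exactly the $\lambda$-contraction of Definition~\ref{definition_8}, settling the invariance requirement of Problem~\ref{prob:agent_safe}.

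For admissibility I would apply the same Schur step to \eqref{eq:opt_con2_MAS}: for each row $r$, the inequality $\begin{bmatrix}P_i & P_iF_r^{i\top}\\ (\ast) & (g_r^i)^2\end{bmatrix}\succeq0$ reduces to $F_r^iP_iF_r^{i\top}\le (g_r^i)^2$, which is the support-function condition $\max_{e^\top P_i^{-1}e\le1}F_r^i e\le g_r^i$ certifying that the ellipsoid lies in the $r$-th half-space; intersecting over $r=1,\dots,q$ gives $\mathcal E_i(P_i,0)\subseteq\mathcal P_i$. The maximality claim then follows from the objective: since $\operatorname{vol}\mathcal E_i(P_i,0)\propto\sqrt{\det P_i}$, the $\log\det$ criterion singles out, among all $(P_i,K_i)$ satisfying the contractivity and containment LMIs, the admissible contractive ellipsoid of maximal volume within the ellipsoidal family. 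Finally, the gain \eqref{eq:gain_MAS_final} is recovered directly from the defining substitution $K_i=U_{0,i}S_iP_i^{-1}=U_{0,i}G_{1,i}$.

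The step I expect to be the main obstacle is the first one: justifying that the purely data-based quantity $X_{1,i}S_i$ may be replaced by the model-based $(A+BK_i)P_i$. This hinges on showing that \eqref{eq:opt_con3_MAS} together with $P_i\succ0$ forces $G_{1,i}=S_iP_i^{-1}$ to be a genuine closed-loop right factor ($X_{0,i}G_{1,i}=I_n$, $U_{0,i}G_{1,i}=K_i$), which in turn relies on the persistency-of-excitation/full-rank hypothesis of Assumption~\ref{stack_full_rank} to guarantee such a factorization exists at all. Everything downstream---the two Schur complements and the volume argument---is then routine; the real care lies in making the data-to-closed-loop substitution exact, so that the certificate becomes convex in $(P_i,S_i)$ rather than bilinear in the unknowns $(P_i,K_i)$.
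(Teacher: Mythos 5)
Your proposal is correct and follows essentially the same route as the paper's proof: the same lifting $S_i = G_{1,i}P_i$ (which you read in reverse as $G_{1,i} := S_iP_i^{-1}$), the same two Schur-complement reductions for contractivity and facet containment, and the same $\log\det$ volume argument. The only difference is direction---you verify that a feasible SDP solution implies the certificate by invoking Lemma~\ref{lem:G_factorisation}, whereas the paper derives the LMIs starting from the contractivity condition---which is a presentational (and arguably cleaner) rendering of the same argument, correctly noting along the way that constraint \eqref{eq:opt_con4_MAS} is needed only for the feedforward stage.
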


\begin{proof}
For the error system $e_i(k+1)=X_{1,i}G_{1,i} e_i(k)$, $\lambda$‑contractivity is expressed as
\begin{equation}
    e_i(k+1)^{\!\top}P_i^{-1}e_i(k+1)
    \;\le\;
    \lambda\,e_i(k)^{\!\top}P_i^{-1}e_i(k),
    \qquad 0<\lambda<1.
    \label{eq:lambda_contract_A}
\end{equation}

Substituting \eqref{eq:error_dynamics_i} into
\eqref{eq:lambda_contract_A} yields the quadratic form
\begin{equation}
    e_i(k)^{\!\top}
    \bigl(G_{1,i}^{\!\top}X_{1,i}^{\!\top}P_i^{-1}
          X_{1,i}G_{1,i}-\lambda P_i^{-1}\bigr)
    e_i(k)\le 0, \label{eq:bmi_A}
\end{equation}
which is bilinear in the decision variables $(G_{1,i},P_i)$.  
To turn \eqref{eq:bmi_A} into a linear matrix inequality (LMI) we
define the lifting variable
\begin{equation}
    S_i := G_{1,i}P_i,          \label{eq:lift_A}
\end{equation}
so that $G_{1,i}=S_iP_i^{-1}$.  Substituting \eqref{eq:lift_A} into
\eqref{eq:bmi_A} and applying the Schur complement produces the LMI
constraint \eqref{eq:opt_con1_MAS}.

The ellipsoid
$\mathcal E_i(P_i,0)=\{e\mid e^{\!\top}P_i^{-1}e\le 1\}$
is required to lie inside the polytope
$\mathcal P_i=\{e\mid F_i\,e\le g_i\}$.  
For each facet index $r=1,\dots,q$ this is equivalent to the inequality
\begin{equation}
    \max_{e\in\mathcal E_i(P_i,0)}F_{r}^i e \;\le\; g_{r}^i,    \label{eq:max_facet_A}
\end{equation}
which, by the Cauchy–Schwarz argument
$ee^{\!\top}\preceq P_i$ for $e\in\mathcal E_i(P_i,0)$, is equivalent
to
\begin{equation}
    F_{r}^i P_iF_{r}^{{i}^{\!\top}}\;\le\; g_{r}^{i^{2}}. \label{eq:facet_quad_A}
\end{equation}

Applying the Schur complement to \eqref{eq:facet_quad_A} yields the
facet LMIs \eqref{eq:opt_con2_MAS}.  
The equalities
\begin{equation}
    X_{0,i}Y_i=P_i,\qquad
    X_{0,i}G_{2,i}=0,\qquad
    U_{0,i}G_{2,i}=I_m,          \label{eq:data_consistency_A}
\end{equation}
which stem directly from the data factorization results in Lemma 2,
link the lifting variables to the collected data and form
constraints \eqref{eq:opt_con3_MAS}–\eqref{eq:opt_con4_MAS}.

The optimization objective $\,-\log\det P_i\,$ is a convex surrogate
for maximizing the volume of
$\mathcal E_i(P_i,0)$, so the solution provides the largest
$\lambda$‑contractive ellipsoid contained in $\mathcal P_i$. This completes the proof.

% To determine the affine term, note that steady‑state satisfaction of
% the plant is $r_i=Ar_i+B\bar u_i$, giving
% $\bar u_i=B^\dagger(I-A)r_i$.  
% With the data expressions obtained by pre‑ and post‑multiplying
% $X_{1,i}=AX_{0,i}+BU_{0,i}$,
% \begin{equation}
%     B = X_{1,i}G_{2,i}, \qquad
%     A = X_{1,i}G_{1,i}-X_{1,i}G_{2,i}U_{0,i}G_{1,i}, \label{eq:A_B_data_A}
% \end{equation}
% the affine input becomes
% \begin{equation}
%     \bar u_i=
%     \bigl(X_{1,i}G_{2,i}\bigr)^{\dagger}
%     \Bigl(I-
%           X_{1,i}G_{1,i}+X_{1,i}G_{2,i}U_{0,i}G_{1,i}\Bigr)r_i,
%     \label{eq:u_bar_final_A}
% \end{equation}
% which matches the expression given in the theorem.  
% Combining the LMIs \eqref{eq:opt_con1_MAS}–\eqref{eq:opt_con4_MAS},
% the facet condition \eqref{eq:max_facet_A}, the volume‑maximizing
% objective, and the affine term \eqref{eq:u_bar_final_A} completes the
% proof.
\end{proof}

The semidefinite program \eqref{eq:opt_MAS} yields, for each agent, a contractive ellipsoid that is fully contained in the admissible polytope and invariant under the data-driven feedback gain \eqref{eq:gain_MAS_final}. In the next section (single agent), we evaluate this program at sampled waypoints to assemble a sequence of overlapping safe ellipsoids that respect dynamics and workspace constraints. The subsequent section extends this construction to multiple agents by adding a pairwise-safety test: a candidate waypoint is retained only if its (projected) ellipsoid does not intersect any active ellipsoid of the other agents. Together, these steps produce SAFE--MA--RRT plans that guide each agent from start to goal with certified safety at every step.

 \section{Single-Agent Safe RRT Design}
This section develops the single-agent planner that links the contractive ellipsoids from the previous section via sampling to produce safe, dynamics-aware paths. A candidate waypoint is accepted only if its ellipsoid overlaps that of its parent, yielding a chain of overlapping invariant ellipsoids from start to goal. The following section extends this construction to multiple agents by adding a pairwise-safety check: a candidate waypoint is retained only if its (projected) ellipsoid does not intersect any active ellipsoid from the other agents, thereby ensuring inter-agent safety. 

The traditional rapidly-exploring random tree (RRT) grows a tree in
state space by iteratively (i) sampling a random state
$q_{\text{rand}}$, (ii) selecting the nearest vertex $q_{\text{near}}$
in the current tree, (iii) steering a short step toward the sample to
obtain $q_{\text{new}}$, and (iv) adding the new vertex if the
straight-line segment $q_{\text{near}}\!\rightarrow\!q_{\text{new}}$
does not intersect static obstacles. While fast, this procedure ignores
system dynamics and offers no safety guarantees: straight-line steering
may be dynamically infeasible, and the path can leave the admissible set
or violate state constraints between waypoints.

In our variant, we first draw a continuous random sample
$q_{\mathrm{rand}} \in \mathcal{Y}$ with goal bias, and then snap it to
the grid, i.e., map it to a discrete cell: the nearest free cell center
(equivalently, the cell containing $q_{\mathrm{rand}}$ in a uniform
partition). This procedure yields a candidate waypoint
$q_{j} \in \mathcal{P}^{\mathrm{cand}}$. The tree’s vertices are
therefore candidate waypoints $q_{j}$ located at cell centers, and all
expansions operate over these cells (one–cell moves).

The proposed method augments classical RRT with two ingredients:

\begin{enumerate}[label=(\alph*)]
\item \textbf{Dynamics awareness via grid steering.}
      The workspace is discretized into square cells, and each tree extension advances by a single neighboring cell in the direction of the sampled target. This one–cell expansion aligns with single–step reachability and avoids large, dynamically infeasible jumps. Combined with the local feedback from the certificate, it yields motion segments that are consistent with the system dynamics.

\item \textbf{Local safety certificates (with overlap).}
      For each one–cell proposal from $c_{\text{near}}$ to $c_{\text{new}}$, we form the smallest axis–aligned rectangle covering the two cells and certify it via the data–driven SDP of Section~\ref{sec:inv_set}. A feasible solve returns a contractive ellipsoid fully contained in that rectangle together with a local feedback gain. The move is accepted only if this child ellipsoid also overlaps the parent ellipsoid,
      ensuring workspace safety and safe transition along waypoints.
\end{enumerate}

\subsection{Offline Path-Planning}
The single-agent safe RRT algorithm proceeds through the following steps.  
Each step is detailed below, from random sampling to the final acceptance of a new waypoint.

\subsubsection{Random sampling}
We draw a random point in free space with a goal bias:
\begin{align}
q_{\mathrm{rand}} &=
\begin{cases}
\textsc{Center}(c_g), & \text{with probability }\beta, \\[3pt]
\textsc{RandPoint}(\mathcal{G},\,\textit{blocked}), & \text{otherwise}.
\end{cases}
\label{eq:qrand}
\end{align}

Snap the point to its nearest free grid cell:
\begin{align}
c_{\mathrm{rand}}
&= \arg\min_{c \in \mathcal{G}_{\mathrm{free}}}
\bigl\| q_{\mathrm{rand}} - \textsc{Center}(c) \bigr\|.
\label{eq:crand}
\end{align}

\subsubsection{Nearest-vertex selection}
Pick the closest existing vertex in the $\ell_1$ metric:
\begin{align}
c_{\mathrm{near}}
&= \arg\min_{v \in V} \bigl\| v - c_{\mathrm{rand}} \bigr\|_{1}.
\label{eq:cnear}
\end{align}

\subsubsection{One-Cell Extension (No Cross-Grid Jumps)}
For any grid cell $c \in \mathcal{G}$, let $\mathcal{N}_4(c)$ denote its
4-neighborhood, i.e., the set of cells that share a horizontal or
vertical edge with $c$: $\mathcal{N}_4(c) \;=\; \{\,c' \in \mathcal{G} \;\mid\; \|c'-c\|_1=1\,\}$. From $c_{\mathrm{near}}$, \textsc{BestNeighbour} evaluates the four adjacent
cells (N/E/S/W), discards those outside the workspace or marked as blocked,
and returns
\begin{align}
c_{\mathrm{new}}
&=
\arg\min_{c\in \mathcal{N}_4(c_{\mathrm{near}})}
\lVert c - c_{\mathrm{rand}}\rVert_1 .
\label{eq:onecell}
\end{align}

The candidate $c_{\mathrm{new}}$ is rejected if
$\textit{blocked}(c_{\mathrm{new}})=\texttt{true}$ or if
$c_{\mathrm{new}}\in V$. Restricting to immediate neighbors ensures that
expansions advance by a single step on the grid, maintaining discrete-time
reachability and avoiding dynamically infeasible jumps.

\subsubsection{Local box and certificate (SDP)}
Form the axis-aligned rectangle $\mathcal P$ that covers the union
$c_{\mathrm{near}}\cup c_{\mathrm{new}}$ in $(x,y)$ and convert it to half-space form
$(F_{xy},g_{xy}) \in \mathbb{R}^{4\times 2}\times\mathbb{R}^4_{\ge 0}$.
Build full-state constraints by combining the position bounds with any additional
limits on non-position states (if any):
\begin{align}
F &=
\begin{bmatrix}
F_{xy}\,C \\[2pt]
F_{\text{extra}}
\end{bmatrix},
\quad
g =
\begin{bmatrix}
g_{xy} \\[2pt]
g_{\text{extra}}
\end{bmatrix}.
\label{eq:full_halfspace}
\end{align}

Solve the data-driven SDP with $(U_0,X_0,X_1)$, $(F,g)$, and contractivity~$\lambda$:
\begin{align}
(\mathrm{feas},\,P_{\mathrm{new}},\,K_{\mathrm{new}})
&= \texttt{SolveSDP}\bigl(U_0, X_0, X_1, F, g, \lambda\bigr).
\label{eq:solve_lmi}
\end{align}

If $\mathrm{feas}=\texttt{false}$, reject and resample.

\subsubsection{Overlap check and commit}
Let $c_{\mathrm{sh}}$ be the unique cell shared by the parent and child rectangles and define
\begin{equation}\label{eq:cmid}
c_{\mathrm{mid}} := \textsc{Center}\!\bigl(c_{\mathrm{sh}}\bigr).
\end{equation}

Project state-space ellipsoids to output space via their precisions:
\begin{equation}\label{eq:proj_precisions}
P_{\mathrm{proj,par}}^{-1} = C\,P_{\mathrm{par}}^{-1}\,C^{\!\top},
\qquad
P_{\mathrm{proj,new}}^{-1} = C\,P_{\mathrm{new}}^{-1}\,C^{\!\top}.
\end{equation}

Let the output centers at the two nodes be
\begin{equation}\label{eq:r_par_new}
r_{\mathrm{par}} := \textsc{Center}(c_{\mathrm{near}}), 
\qquad 
r_{\mathrm{new}} := \textsc{Center}(c_{\mathrm{new}}).
\end{equation}

Require the shared-cell center to lie in both projected ellipsoids, measured as deviations from each node’s center:
\begin{align}
(c_{\mathrm{mid}}-r_{\mathrm{par}})^{\!\top} P_{\mathrm{proj,par}}^{-1} (c_{\mathrm{mid}}-r_{\mathrm{par}}) &\le 1, 
\label{eq:proj_overlap_par}\\
(c_{\mathrm{mid}}-r_{\mathrm{new}})^{\!\top} P_{\mathrm{proj,new}}^{-1} (c_{\mathrm{mid}}-r_{\mathrm{new}}) &\le 1.
\label{eq:proj_overlap_new}
\end{align}
(At the root, skip~\eqref{eq:proj_overlap_par} if no parent certificate exists.)
If both are satisfied, commit the node and store
\begin{align}
V &\gets V \cup \{c_{\mathrm{new}}\}, \\
\textit{parent}(c_{\mathrm{new}}) & \gets c_{\mathrm{near}},
\label{eq:tree_update}\\
\bigl(P_{\mathrm{new}},\,K_{\mathrm{new}},\,r_{\mathrm{new}}\bigr)
&\;\text{attached to }(c_{\mathrm{near}}\!\to c_{\mathrm{new}}),
\\
r_{\mathrm{new}} & := \textsc{Center}(c_{\mathrm{new}}).
\label{eq:attach_cert}
\end{align}

The final safe path is obtained by backtracking from $c_g$ to $c_s$; the selected waypoints are the cell centers
$p_\ell := \textsc{Center}(c_\ell)$, with the stored per-edge certificates $(P,K)$.

Algorithm~\ref{alg:SA_RRT} summarizes the full sequence of operations for the single-agent safe RRT method described above.

\begin{algorithm}[h!]
\caption{\textsc{Single-Agent Safe–RRT Algorithm}}
\label{alg:SA_RRT}
\begin{algorithmic}[1]
\Require grid $\mathcal G$, obstacle mask \textit{blocked}; start $c_s$; goal $c_g$; goal–bias $\beta$; data $(U_0,X_0,X_1)$; contractivity $\lambda$; measurement matrix $C\in\mathbb{R}^{2\times n}$; optional constraints on non–position states; SDP solver \texttt{SolveLMI}
\State $V\gets\{c_s\}$,\; $\textit{parent}(c_s)\gets c_s$;\; attach initial $(P_s,K_s)$ if available
\While{$c_g\notin V$}
    \State \textbf{Sample point:}
    \Statex \hspace{\algorithmicindent}%
    $q_{\mathrm{rand}} \leftarrow
    \begin{cases}
      \textsc{Center}(c_g), & \text{w.p. }\beta,\\
      \textsc{RandPoint}(\mathcal{G},\textit{blocked}), & \text{o.w.}
    \end{cases}$
    \State \textbf{Snap to grid:}\;
           $c_{\mathrm{rand}} \gets \arg\min_{c\in\mathcal{G}_{\mathrm{free}}}
           \|q_{\mathrm{rand}} - \textsc{Center}(c)\|$
    \State \textbf{Nearest:}\;
           $c_{\mathrm{near}}\gets \arg\min_{v\in V}\|v-c_{\mathrm{rand}}\|_1$
    \State \textbf{Extend one cell:}\;
           $c_{\mathrm{new}}\gets\textsc{BestNeighbour}(c_{\mathrm{near}},c_{\mathrm{rand}})$
    \If{$\textit{blocked}(c_{\mathrm{new}})$ \textbf{or} $c_{\mathrm{new}}\in V$} \State \textbf{continue} \EndIf
    \State \textbf{Local box in position:}\; $\mathcal P \gets$ axis–aligned rectangle covering $c_{\mathrm{near}}\cup c_{\mathrm{new}}$
    \State $(F_{xy},g_{xy})\gets\textsc{BoxToHalfspace}(\mathcal P)$ \Comment{$F_{xy}\in\mathbb R^{4\times 2}$, $g_{xy}\in\mathbb R^4_{\ge 0}$}
    \State \textbf{Full–state halfspaces:} build $(F,g)$ by combining the position constraints with any additional bounds on other states, e.g.
           \[
              F \gets \begin{bmatrix} F_{xy}C \\ F_{\text{extra}} \end{bmatrix},\qquad
              g \gets \begin{bmatrix} g_{xy} \\ g_{\text{extra}} \end{bmatrix},
           \]
           where $(F_{\text{extra}},g_{\text{extra}})$ encode optional constraints (e.g., velocity limits).
    \State \textbf{Child certificate (SDP):}\;
           $(\text{feas},P_{\text{new}},K_{\text{new}})\gets\texttt{SolveSDP}(U_0,X_0,X_1,F,g,\lambda)$
    \If{\textbf{not} \text{feas}} \State \textbf{continue} \EndIf
    \State \textbf{Projected overlap test (in output space):}
           \Statex \hspace{\algorithmicindent}%
           $c_{\mathrm{mid}} \gets \textsc{Center}(\textsc{SharedCell}(c_{\mathrm{near}},c_{\mathrm{new}}))$;\;
           $P_{\text{par}}\gets P(c_{\mathrm{near}})$ (if available)\\
           \hspace{\algorithmicindent}%
           $P_{\text{proj,par}}^{-1}\gets C\,P_{\text{par}}^{-1}C^\top$;\;
           $P_{\text{proj,new}}^{-1}\gets C\,P_{\text{new}}^{-1}C^\top$
    \If{$P_{\text{par}}$ exists \textbf{and} $\bigl(c_{\mathrm{mid}}^\top P_{\text{proj,par}}^{-1} c_{\mathrm{mid}}>1$ \textbf{or} $c_{\mathrm{mid}}^\top P_{\text{proj,new}}^{-1} c_{\mathrm{mid}}>1\bigr)$}
        \State \textbf{continue} \Comment{shared cell center not contained in both projected ellipsoids}
    \EndIf
    \State \textbf{Accept:}\; $V\gets V\cup\{c_{\mathrm{new}}\}$;\;
           $\textit{parent}(c_{\mathrm{new}})\gets c_{\mathrm{near}}$;\;
           attach $\bigl(P_{\text{new}},K_{\text{new}},\,r=\textsc{Center}(c_{\mathrm{new}})\bigr)$ to $c_{\mathrm{new}}$
\EndWhile
\State \Return safe path $\{p_\ell\}_{\ell=0}^{N_w}$ with 
$p_\ell = \textsc{Center}(c_\ell)$ obtained by backtracking 
from $c_g$ to $c_s$, together with per–edge $(P,K)$.
\end{algorithmic}
\end{algorithm}

\subsection{Online Execution}
Given a certified cell sequence 
\(\pi=\langle c_0,\dots,c_{N_w}\rangle\) returned by 
Algorithm~\ref{alg:SA_RRT}, we execute it segment by segment with the 
local data–driven affine controllers. The corresponding selected 
waypoints are defined as 
\begin{equation}
    p_\ell := \mathrm{Center}(c_\ell), \qquad \ell=0,\dots,N_w.
\end{equation}

For each edge \((c_{\ell-1}\!\to c_\ell)\) we set the reference to 
\(p_\ell\) and compute the steady state 
\((\bar x_\ell,\bar u_\ell)\) via the data–based map 
in~\eqref{eq.equilibrium_calculation}, which enforces 
\(C\bar x_\ell = p_\ell\). 
While traversing that segment, we apply the control law~\eqref{eq.controller_final} 
with the certificate \((P_\ell,K_\ell)\) attached to the edge, and monitor safety in 
the measured output \(y=Cx\) by testing membership in the projected ellipsoid 
\(\mathcal E_y\!\bigl(P_{\mathrm{proj},\ell},\,p_\ell\bigr)\), where 
\(P_{\mathrm{proj},\ell}^{-1}=C\,P_\ell^{-1}C^\top\). 
We switch to the next controller as soon as \(y\) enters the overlapping next 
projected ellipsoid \(\mathcal E_y\!\bigl(P_{\mathrm{proj},\ell+1},\,p_{\ell+1}\bigr)\). 
This model–free rollout preserves planning–time safety guarantees because 
handoffs occur only within certified overlaps. 
Algorithm~\ref{alg:exec_single} summarizes the execution procedure.

\begin{algorithm}[h!]
\caption{\textsc{Execute Certified Path} (Single-Agent)}
\label{alg:exec_single}
\begin{algorithmic}[1]
\Require Backtracked cell path $\pi=\langle c_0,\dots,c_{N_w}\rangle$; per-edge certificates $\{(P_\ell,K_\ell)\}_{\ell=1}^{N_w}$ (each attached to edge $(c_{\ell-1}\!\to c_\ell)$); output map $C$; steady–state map $\hat T$ from \eqref{eq.equilibrium_calculation}; initial state $x_0$; tolerance $r_f>0$
\Ensure Safe execution from $p_0=\mathrm{Center}(c_0)$ to $p_{N_w}=\mathrm{Center}(c_{N_w})$ in the measured output $y=Cx$
\State \textbf{Define waypoints:} for $\ell=0,\dots,N_w$, set $p_\ell \gets \mathrm{Center}(c_\ell)$
\State \textbf{Precompute steady states:} for $\ell=1,\dots,N_w$,
\Statex \hspace{\algorithmicindent}
$\displaystyle
\begin{bmatrix}\bar x_\ell \\ \bar u_\ell\end{bmatrix}
\gets
\hat T^{-1}
\begin{bmatrix} 0 \\ p_\ell \end{bmatrix}$ 
\Comment{equilibrium enforcing $C\bar x_\ell=p_\ell$}
\State \textbf{Project certificates to output space:} for $\ell=1,\dots,N_w$,
\Statex \hspace{\algorithmicindent}
$P_{\mathrm{proj},\ell}^{-1} \gets C\,P_\ell^{-1}\,C^\top$
\Statex \hspace{\algorithmicindent}\Comment{$\mathcal E_y(P_{\mathrm{proj},\ell},p_\ell)=\{y:(y-p_\ell)^\top P_{\mathrm{proj},\ell}^{-1}(y-p_\ell)\le 1\}$}
\State $\ell \gets 1$, \; $k \gets 0$
\While{true}
    \State \textbf{Control law:} \quad $u_k \gets K_\ell\bigl(x_k-\bar x_\ell\bigr) + \bar u_\ell$
    \State \textbf{State update:} \quad apply $u_k$ and obtain $x_{k+1}$ \;(e.g., $x_{k+1}=Ax_k+Bu_k$ or measured); set $y_{k+1}\gets Cx_{k+1}$
    \State \textbf{Safety check (output ellipsoid):}
    \Statex \hspace{\algorithmicindent}
    \textbf{if } $(y_{k+1}-p_\ell)^\top P_{\mathrm{proj},\ell}^{-1}(y_{k+1}-p_\ell) > 1$
    \textbf{ then abort (safety violated)}
    \If{$\ell<N_w$ \textbf{and} $(y_{k+1}-p_{\ell+1})^\top P_{\mathrm{proj},\ell+1}^{-1}(y_{k+1}-p_{\ell+1}) \le 1$}
        \State $\ell \gets \ell+1$ \Comment{handoff when entering next certified output ellipsoid}
    \ElsIf{$\ell=N_w$ \textbf{and} $\|y_{k+1}-p_{N_w}\|_2 \le r_f$}
        \State \Return executed trajectory $\{x_t\}_{t=0}^{k+1}$ (and outputs $\{y_t\}$)
    \EndIf
    \State $k \gets k+1$
\EndWhile
\end{algorithmic}
\end{algorithm}

\section{Multi-Agent Safe RRT Design}
% We extend the single–agent Grid\,RRT–Ellipse to a team of $N_{\mathrm{ag}}$ agents that advance on a shared time grid. All trees are expanded \emph{synchronously} and coordinated via a global space–time reservation table $\textit{res}[t]$, which marks claimed cells at layer $t$. Each vertex $v$ in agent $i$’s tree stores its arrival layer $\textit{depth}_i(v)$ (one step per edge). Inter-agent safety is enforced by rejecting proposals that collide with existing reservations or induce head-on swaps. While our coordination borrows synchronized expansion and reservations from cooperative pathfinding \cite{silver2005cooperative}, those geometric methods do not provide per-edge, data-driven invariant certificates; here each accepted move is certified by an SDP ellipsoid and an overlap-based switching rule. \emph{The steps are as follows:}
The single-agent safe RRT grows one dynamics-aware tree whose
edges are certified safe by the SDP of Section~\ref{sec:inv_set}. To
extend this idea to a team of $N_a$ agents that advance on
the same discrete time grid, all trees are expanded
synchronously and coordinated through a global space–time
reservation table. The array $\textit{res}[k]$ (one Boolean grid per
layer $k\in\mathbb{N}_0$) records which cells are already claimed at
time~$k$. Each vertex $v$ in agent~$i$’s tree carries an integer
$\textit{depth}_i(v)$ equal to the number of one-cell moves from the
start, so $\textit{depth}_i(v)$ coincides with the time layer at which
that cell will be reached if every accepted edge is executed in one
sampling period. Inter-agent safety is enforced by rejecting any
proposal that conflicts in space–time with existing reservations, as
well as proposals that would cause swaps (head-on exchanges). While our coordination borrows synchronized expansion and reservations from cooperative pathfinding \cite{silver2005cooperative}, those geometric methods do not provide per-edge, data-driven invariant certificates; here each accepted move is certified by an SDP ellipsoid and an overlap-based switching rule. 

\subsection{Offline Path Planning}
The steps for offline path planning in the multi-agent setting are as follows:

\subsubsection{Random Sampling}
Each agent draws a continuous exploration point, then maps it to its containing grid cell. With goal bias~$\beta$,
\begin{align}
q^i_{\mathrm{rand}} &=
\begin{cases}
\mathrm{Center}(g^{i}), & \text{with probability }\beta,\\
\textsc{RandPoint}\!\big(\mathcal{X}_{\mathrm{free}}\big), & \text{otherwise},
\end{cases}
\label{eq:mas_qrand}\\[3pt]
c^i_{\mathrm{rand}}
&=
\arg\min_{c\in\mathcal G_{\mathrm{free}}}
\left\|q^i_{\mathrm{rand}}-\mathrm{Center}(c)\right\|_2 .
\label{eq:mas_crand}
\end{align}

Refreshing the target at each layer preserves the rapid-exploration property of RRT.

\subsubsection{Nearest-Vertex Search}
Given $c^i_{\mathrm{rand}}$, agent $i$ selects the existing vertex closest in the Manhattan metric:
\begin{align}
c^i_{\mathrm{near}}
&=
\arg\min_{v\in V_i}\,\lVert v-c^i_{\mathrm{rand}}\rVert_1 .
\label{eq:mas_nearest}
\end{align}

The $\ell_1$ metric matches the rectilinear grid and 4-neighbor connectivity; its cost is negligible relative to the SDP solve.

\subsubsection{One-Cell Extension (No Cross-Grid Jumps)}
From $c^i_{\mathrm{near}}$, \textsc{BestNeighbour} evaluates the four adjacent cells (N/E/S/W), discards those outside the workspace or inside debris, and returns
\begin{align}
c^i_{\mathrm{new}}
&=
\arg\min_{c\in \mathcal{N}_4(c^i_{\mathrm{near}})}
\lVert c - c^i_{\mathrm{rand}}\rVert_1 ,
\label{eq:mas_extend}\\
k_{\mathrm{new}}
&=
\textit{depth}_i\!\bigl(c^i_{\mathrm{near}}\bigr)+1
\;=\; k{+}1 .
\label{eq:mas_tnew}
\end{align}

Restricting to immediate neighbors aligns the discrete step with single-step reachability and prevents large, dynamically infeasible jumps.

\subsubsection{Child Certificate (SDP) and Overlap in Output Space}
Form the axis-aligned rectangle $\mathcal P^{\,i}$ covering $c^i_{\mathrm{near}}\cup c^i_{\mathrm{new}}$ in $(x,y)$, convert to half-space form, then lift to the full state if additional state constraints apply
\begin{align}
(F^{xy},g^{xy}) \;&=\; \textsc{BoxToHalfspace}\!\bigl(\mathcal P^{\,i}\bigr),
\label{eq:mas_box}\\
(F^{\,i},g^{\,i}) \;&=\; \textsc{LiftToState}\!\bigl(F^{xy},g^{xy}\bigr).
\label{eq:mas_lift}
\end{align}

Solve the data-driven SDP with contractivity $\lambda$ to obtain a local controller and contractive ellipsoid
\begin{align}
(\mathrm{feas},P^i_\ell,K^i_\ell)
\;&=\;
\texttt{SolveSDP}\!\bigl(U_{0,i},X_{0,i},X_{1,i},F^{\,i},g^{\,i},\lambda\bigr).
\label{eq:mas_sdp}
\end{align}

If \eqref{eq:mas_sdp} is infeasible, discard the candidate.

To guarantee safe switching, check overlap in the measured output $y_i(k)=C\,x_i(k)$ by testing the center of the shared cell in the projected ellipsoids
\begin{align}
c^i_{\mathrm{mid}} \;=\; \mathrm{Center}\!\Big(&\textsc{SharedCell}\big(c^i_{\mathrm{near}},c^i_{\mathrm{new}}\big)\Big),
\label{eq:mas_cmid}\\
(P^i_{\mathrm{par,proj}})^{-1} \;&=\; C\,(P^i_{\mathrm{par}})^{-1}C^\top, 
\\
(P^i_{\mathrm{new,proj}})^{-1} \;&=\; C\,(P^i_\ell)^{-1}C^\top,
\label{eq:mas_proj}\\
(c^i_{\mathrm{mid}})^\top &(P^i_{\mathrm{par,proj}})^{-1} c^i_{\mathrm{mid}} \;\le\; 1,
\\
(c^i_{\mathrm{mid}})^\top &(P^i_{\mathrm{new,proj}})^{-1} c^i_{\mathrm{mid}} \;\le\; 1.
\label{eq:mas_overlap_proj}
\end{align}

The candidate is accepted only if both inequalities in \eqref{eq:mas_overlap_proj} hold.\footnote{At a root node with no parent certificate, the overlap test is skipped.}

\subsubsection{Space--Time Conflict Resolution}
Collect all proposals that passed \eqref{eq:mas_sdp}--\eqref{eq:mas_overlap_proj} into $\mathcal C_k$ and prune by:
\begin{itemize}
\item \emph{Reservation conflict:} reject any proposal with $\textit{res}[k{+}1](c^i_{\mathrm{new}})=\texttt{true}$;
\item \emph{Same-cell collision:} if multiple agents propose the same cell at layer $k{+}1$, keep one (e.g., smaller $\ell_1$ heuristic-to-go) and break ties randomly;
\item \emph{Swap conflict:} if a head-on swap occurs,
$(c^i_{\mathrm{near}},c^i_{\mathrm{new}})=(c^j_{\mathrm{new}},c^j_{\mathrm{near}})$, reject one via the same rule.
\end{itemize}

\subsubsection{Synchronous Commit and Reservation Update}
Commit all surviving proposals simultaneously. For each accepted agent $i$,
\begin{align}
&V_i \gets V_i \cup \{c^i_{\mathrm{new}}\}, \qquad
\textit{parent}(c^i_{\mathrm{new}}) \gets c^i_{\mathrm{near}},
\label{eq:mas_commit1}\\
&\textit{depth}_i(c^i_{\mathrm{new}}) \gets k{+}1, \qquad
\textit{res}[k{+}1](c^i_{\mathrm{new}}) \gets \texttt{true},
\label{eq:mas_commit2}\\
&\text{attach } (P^i_\ell,K^i_\ell) \text{ to edge } (c^i_{\mathrm{near}}\!\to c^i_{\mathrm{new}}).
\label{eq:mas_commit3}
\end{align}

Iterating these stages grows all trees in synchronized layers until every agent reaches its goal. Because each accepted edge carries its own data-driven feedback gain and contractive ellipsoid, the resulting space--time trajectories are dynamics-aware and respect workspace constraints without explicit inter-agent distance checks. Algorithm~\ref{alg:MA_RRT} summarizes the simultaneous multi-agent planning procedure.

\begin{algorithm}[h!]
\caption{\textsc{SAFE--MA--RRT}}
\label{alg:MA_RRT}
\begin{algorithmic}[1]
\Require Grid $\mathcal G$, mask \emph{blocked}; starts $\{c^i_0\}$, goals $\{c^i_{N_w^{(i)}}\}$; goal-bias $\beta$; data $\{U_{0,i},X_{0,i},X_{1,i}\}$; contractivity $\lambda$; measurement matrix $C\in\mathbb{R}^{2\times n}$; SDP solver \texttt{SolveLMI}
\State \textbf{Init:} For all $i$, set $V_i\gets\{c^i_0\}$, $\textit{parent}(c^i_0)\gets c^i_0$, $\textit{depth}_i(c^i_0)\gets0$; attach initial certificate to $c^i_0$ if available
\State $\textit{res}[0]\gets\texttt{false}$; set $\textit{res}[0](c^i_0)\gets\texttt{true}$ for all $i$
\For{$k=0,1,2,\dots$ \textbf{until} all $c^i_{N_w^{(i)}}$ reached}
   \State $\mathcal C_k \gets \emptyset$
   \ForAll{$i$ with $c^i_{N_w^{(i)}}\notin V_i$}
      \State \textbf{Sample--snap--nearest:} draw $q^i_{\mathrm{rand}}$ (goal bias $\beta$), snap to $c^i_{\mathrm{rand}}\in\mathcal G_{\mathrm{free}}$, and select $c^i_{\mathrm{near}}\in V_i$ minimizing $\|v-c^i_{\mathrm{rand}}\|_1$
      \State \textbf{Extend one cell:}\;
             $c^i_{\mathrm{new}}\gets \textsc{BestNeighbour}(c^i_{\mathrm{near}},c^i_{\mathrm{rand}})$
      \If{\emph{blocked}($c^i_{\mathrm{new}}$) $\lor$ $c^i_{\mathrm{new}}\in V_i$} \State \textbf{continue} \EndIf
      \State \textbf{Local box (in $(x,y)$) and full-state half-spaces:}
      \Statex \hspace{\algorithmicindent}%
             $\mathcal P^i\gets$ rectangle covering $c^i_{\mathrm{near}}\cup c^i_{\mathrm{new}}$;\;
             $(F^i_{xy},g^i_{xy})\gets\textsc{BoxToHalfspace}(\mathcal P^i)$;\;
             $(F^i,g^i)\gets\textsc{LiftToState}(F^i_{xy},g^i_{xy})$
      \State \textbf{Child certificate (SDP):}\;
             $(\text{feas},P^i_\ell,K^i_\ell)\gets \texttt{SolveLMI}(U_{0,i},X_{0,i},X_{1,i},F^i,g^i,\lambda)$
      \If{\textbf{not} \text{feas}} \State \textbf{continue} \EndIf
      \State \textbf{Overlap test (projected to $(x,y)$):}
      \Statex \hspace{\algorithmicindent}%
             $c^i_{\mathrm{mid}} \gets \mathrm{Center}\!\bigl(\textsc{SharedCell}(c^i_{\mathrm{near}},c^i_{\mathrm{new}})\bigr)$,
             \quad $P^i_{\mathrm{par}}\gets\text{cert}\!\bigl(c^i_{\mathrm{near}}\bigr)$
      \If{$P^i_{\mathrm{par}}$ exists}
         \State $(P^i_{\mathrm{proj,par}})^{-1} \gets C\,(P^i_{\mathrm{par}})^{-1}C^\top$, \quad
                $(P^i_{\mathrm{proj,new}})^{-1} \gets C\,(P^i_\ell)^{-1}C^\top$
         \If{$(c^i_{\mathrm{mid}})^\top (P^i_{\mathrm{proj,par}})^{-1} c^i_{\mathrm{mid}} > 1$
             \textbf{or} $(c^i_{\mathrm{mid}})^\top (P^i_{\mathrm{proj,new}})^{-1} c^i_{\mathrm{mid}} > 1$}
             \State \textbf{continue} \Comment{shared-cell center not in both projected ellipsoids}
         \EndIf
      \EndIf
      \State Add proposal $\bigl(i, c^i_{\mathrm{near}}, c^i_{\mathrm{new}}, P^i_\ell,K^i_\ell\bigr)$ to $\mathcal C_k$
   \EndFor
   \State \textbf{Conflict resolution (space--time):} discard proposals violating $\textit{res}[k{+}1]$; for duplicates or swaps, keep one by \textsc{TieBreak}
   \State \textbf{Commit (synchronous):} For each accepted $(i,c^i_{\mathrm{near}},c^i_{\mathrm{new}},P^i_\ell,K^i_\ell)$:
   \Statex \hspace{\algorithmicindent}%
      $V_i\gets V_i\cup\{c^i_{\mathrm{new}}\}$;\;
      $\textit{parent}(c^i_{\mathrm{new}})\gets c^i_{\mathrm{near}}$;\;
      $\textit{depth}_i(c^i_{\mathrm{new}})\gets k{+}1$;\;
      $\textit{res}[k{+}1](c^i_{\mathrm{new}})\gets\texttt{true}$;\;
      attach $(P^i_\ell,K^i_\ell)$ to edge $(c^i_{\mathrm{near}}\!\to c^i_{\mathrm{new}})$
\EndFor
\State \Return $\{\pi^i\}$ by backtracking each agent’s parent; each edge carries $(P^i_\ell,K^i_\ell)$
\end{algorithmic}
\end{algorithm}

\subsection{Online Execution}
Given the certified paths $\{\pi^{i}\}$ returned by the planner and the per–edge certificates $\{(P^{i}_\ell,K^{i}_\ell)\}$, execution proceeds segment–by–segment and in lockstep across agents. For each edge $(c^{i}_{\ell-1}\!\to c^{i}_{\ell})$ we define the selected waypoint $p^i_\ell := \mathrm{Center}(c^i_\ell)$, compute the steady state $(\bar x^{\,i}_\ell,\bar u^{\,i}_\ell)$ from the data–based map in~\eqref{eq.equilibrium_calculation} such that $C\,\bar x^{\,i}_\ell = p^i_\ell$, and apply the affine law~\eqref{eq.controller_final} using $(P^{i}_\ell,K^{i}_\ell)$. Safety is monitored in the measured output $y_i(k)=C\,x_i(k)$ by testing membership in the projected ellipsoid $\mathcal E_y(P^{i}_{\mathrm{proj},\ell},p^i_\ell)=\{\,y : (y-p^i_\ell)^\top (P^{i}_{\mathrm{proj},\ell})^{-1} (y-p^i_\ell)\le 1\,\}$, where $(P^{i}_{\mathrm{proj},\ell})^{-1}=C\,(P^{i}_\ell)^{-1}C^\top$. Each agent switches to the next controller as soon as its output enters the overlapping projected ellipsoid around $p^i_{\ell+1}$; these synchronized handoffs preserve the planning–time safety guarantees for all agents. Algorithm~\ref{alg:exec_multi} summarizes the procedure.

\begin{algorithm}[h!]
\caption{\textsc{Execute Certified Paths} (Multi-Agent)}
\label{alg:exec_multi}
\begin{algorithmic}[1]
\Require Paths $\{\pi^{i}\}$ with $\pi^{i}=\langle c^{i}_0,\dots,c^{i}_{N^{(i)}_w}\rangle$; per-edge certificates $\{(P^{i}_\ell,K^{i}_\ell)\}_{\ell=1}^{N^{(i)}_w}$ (each attached to edge $(c^{i}_{\ell-1}\!\to c^{i}_\ell)$); steady--state map $\hat T$ from~\eqref{eq.equilibrium_calculation}; initial states $\{x_i(0)\}$; tolerance $r_f>0$; output map $C$
\Ensure Safe, synchronized execution for all agents under output-space constraints ($y_i(k)=C\,x_i(k)$)
\State \textbf{Define waypoints:} for each agent $i$ and $\ell=0,\dots,N^{(i)}_w$, set $p^i_\ell \gets \mathrm{Center}(c^i_\ell)$
\State \textbf{Precompute steady states:} for each $i$, $\ell=1,\dots,N^{(i)}_w$,
\[
\begin{bmatrix}\bar x^{\,i}_\ell \\ \bar u^{\,i}_\ell\end{bmatrix}
\gets
\hat T^{-1}
\begin{bmatrix} 0 \\ p^i_\ell \end{bmatrix},
\qquad C\,\bar x^{\,i}_\ell = p^i_\ell
\]
\State \textbf{Precompute projected precisions:} for each $i$, $\ell=1,\dots,N^{(i)}_w$,
\[
\bigl(P^{i}_{\mathrm{proj},\ell}\bigr)^{-1} \gets C\,\bigl(P^{i}_\ell\bigr)^{-1}C^{\!\top}
\]
\State For each agent $i$, initialize its current segment index $\ell \gets 1$; set global time $k\gets 0$
\While{some agent not finished}
  \ForAll{agents $i$ not finished}
    \State \textbf{Control law \eqref{eq.controller_final}:}
    \[
    u_i(k) \;=\; K^{i}_\ell\bigl(x_i(k)-\bar x^{\,i}_\ell\bigr) + \bar u^{\,i}_\ell
    \]
    \State \textbf{State/output update:} apply $u_i(k)$ and obtain $x_i(k{+}1)$ (e.g., $x_i(k{+}1)=A\,x_i(k)+B\,u_i(k)$ or measured); set $y_i(k{+}1)\gets C\,x_i(k{+}1)$
    \State \textbf{Safety in active projected ellipsoid:}
    \[
    \begin{aligned}
        \textbf{if }\;&
        \bigl(y_i(k{+}1)-p^i_\ell\bigr)^{\!\top}
        \bigl(P^{i}_{\mathrm{proj},\ell}\bigr)^{-1}
        \bigl(y_i(k{+}1)-p^i_\ell\bigr) \;>\; 1 \\
        &\textbf{then abort}
    \end{aligned}
    \]
    \If{$\ell<N^{(i)}_w$ \textbf{and}
        $(y_i(k{+}1)-p^i_{\ell+1})^\top
         \bigl(P^{i}_{\mathrm{proj},\,\ell+1}\bigr)^{-1}
         (y_i(k{+}1)-p^i_{\ell+1}) \le 1$}
        \State $\ell \gets \ell+1$ \Comment{handoff when entering next projected ellipsoid}
    \ElsIf{$\ell=N^{(i)}_w$ \textbf{and} $\|y_i(k{+}1)-p^i_{N^{(i)}_w}\|_2 \le r_f$}
        \State mark agent $i$ as finished
    \EndIf
  \EndFor
  \State $k\gets k+1$
\EndWhile
\State \Return executed trajectories $\bigl\{\{x_i(t)\}_{t=0}^{k_i}\bigr\}$ (and outputs $\{y_i(t)\}$ if desired)
\end{algorithmic}
\end{algorithm}

\begin{remark}
While our simultaneous expansion and reservation-table mechanism is inspired by cooperative pathfinding \cite{silver2005cooperative}, those techniques do not directly accommodate unknown dynamics or certify safety under feedback. In our approach, the geometric feasibility checks are replaced by data-driven invariant-set certificates (per edge) and an ellipsoid-overlap switching rule, thereby integrating dynamics awareness and safety guarantees into the MAS planner.
\end{remark}

\section{Simulation Studies}\label{sec:simulation}
This section evaluates the proposed SAFE--MA-RRT on the spacecraft rendezvous model, restricted to in–plane ($x$–$y$) motion. We first describe the dynamics and environment, then present (i) a single–spacecraft (single-agent) result and (ii) a two–spacecraft (multi-agent) result.

\subsection{System Dynamics and Environment}
We adopt the standard linearized relative-motion (Clohessy–Wiltshire) model in the local orbital frame and restrict attention to the in–plane coordinates. See \cite{guglieri2014design} for application context. Let $z_1,z_2$ denote the in–plane relative positions and $\dot z_1,\dot z_2$ their velocities; the inputs $v_1,v_2$ are in–plane thrust accelerations. The continuous-time dynamics are \cite{wie1998space}
\begin{equation}
\begin{aligned}
\ddot z_1 &= 3r^2 z_1 + 2r\,\dot z_2 + v_1,\\
\ddot z_2 &= -2r\,\dot z_1 + v_2,
\end{aligned}
\label{eq:CW_ct}
\end{equation}
with mean motion $r=1.1\times 10^{-1}\,\mathrm{s}^{-1}$. Stacking $x=[z_1\;z_2\;\dot z_1\;\dot z_2]^\top$ and $u=[v_1\;v_2]^\top$ gives
\begin{equation}
\dot x \;=\; A_c x + B_c u ,
\end{equation}
\[
A_c=\begin{bmatrix}
0 & 0 & 1 & 0\\
0 & 0 & 0 & 1\\
3r^2 & 0 & 0 & 2r\\
0 & 0 & -2r & 0
\end{bmatrix},
\qquad
B_c=\begin{bmatrix}
0 & 0\\
0 & 0\\
1 & 0\\
0 & 1
\end{bmatrix}.
\]

With a zero-order hold (ZOH) and sampling time $T_s=30\,\mathrm{s}$, the discrete-time system is
\begin{equation}
x(k{+}1) \;=\; A\,x(k) + B\,u(k),
\qquad
(A,B)=\mathrm{zoh}\bigl(A_c,B_c,T_s\bigr).
\end{equation}

The measured/output coordinates are the in–plane positions
\[
C=\begin{bmatrix}
1 & 0 & 0 & 0\\[2pt]
0 & 1 & 0 & 0
\end{bmatrix}.
\]

The planner operates in a $100\times100$\,m square workspace discretized into $10$\,m cells over $[-50,50]\times[-50,50]$\,m. Debris is modeled as $16$\,m, axis-aligned squares; cells overlapped by a debris object are marked blocked. For every candidate edge, a local state-feedback gain and a contractive ellipsoid are computed by solving the SDP of Section~\ref{sec:inv_set} with contraction factor $\lambda=0.94$ and a log-det objective. A move is accepted only if the new ellipsoid overlaps the parent ellipsoid at the shared mid-edge point. The goal-bias is $\beta=0.20$.

\subsection{Single-Agent Scenario}
The map contains a single $16$\,m square debris centered at the origin. The spacecraft starts at $(-45,-45)$\,m and must reach $(45,45)$\,m. At each iteration, the algorithm (i) samples a target cell with goal bias $\beta$, (ii) selects the nearest tree vertex in the $\ell_1$ metric, (iii) proposes a one-cell extension, and (iv) solves the SDP on the two-cell rectangle enclosing the parent–child pair. If the SDP is feasible and the mid-edge overlap holds, the new vertex is accepted and the associated $(P,K)$ certificate is stored.

\begin{figure}[h!]
    \centering
    \subcaptionbox{Planned Safe–RRT path with invariant ellipsoids.\label{fig:SA_path}}
        {\includegraphics[width=0.9\linewidth]{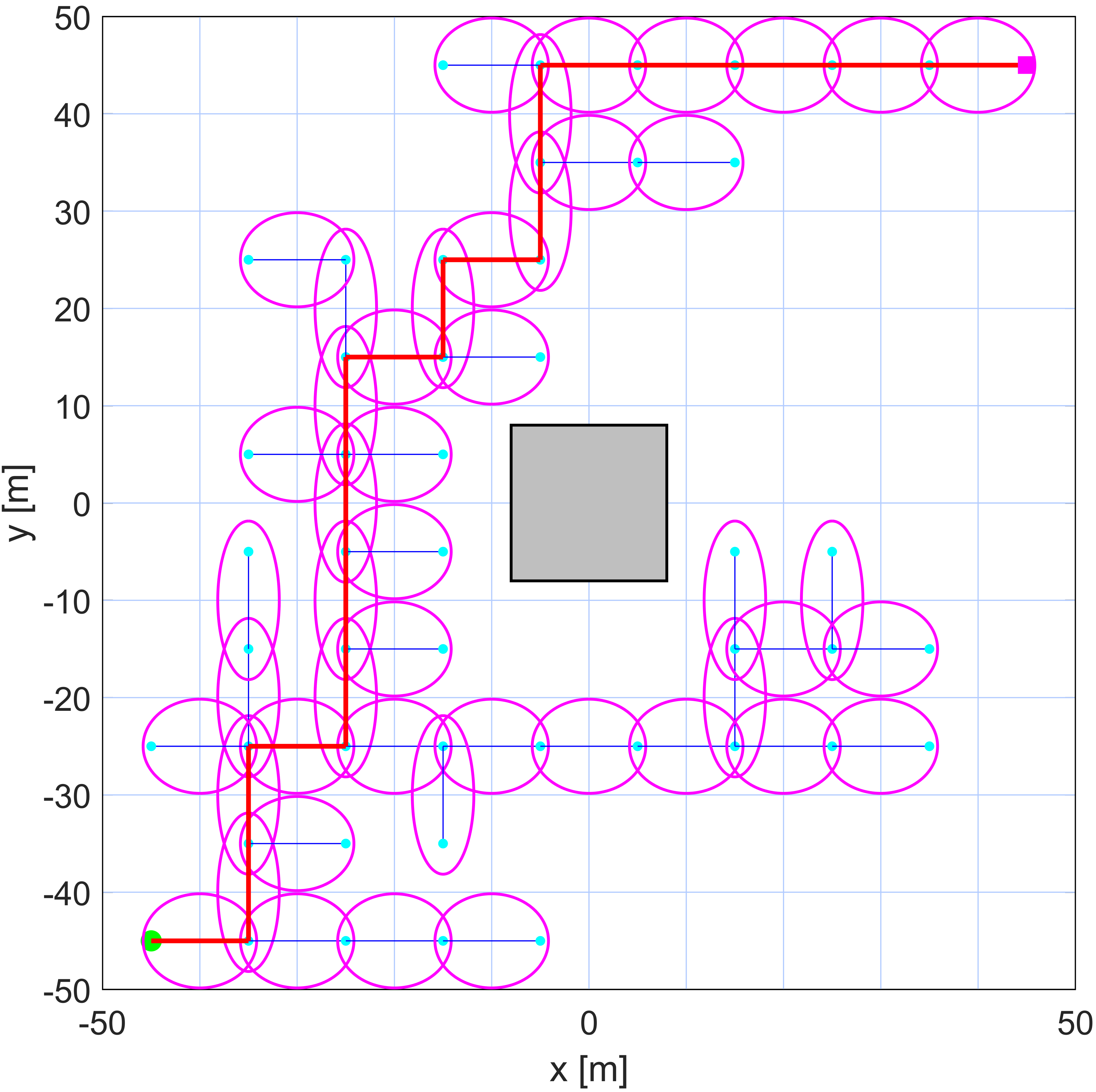}}\par\medskip
    \subcaptionbox{Executed closed-loop trajectory under per-edge controllers.\label{fig:SA_exec}}
        {\includegraphics[width=0.9\linewidth]{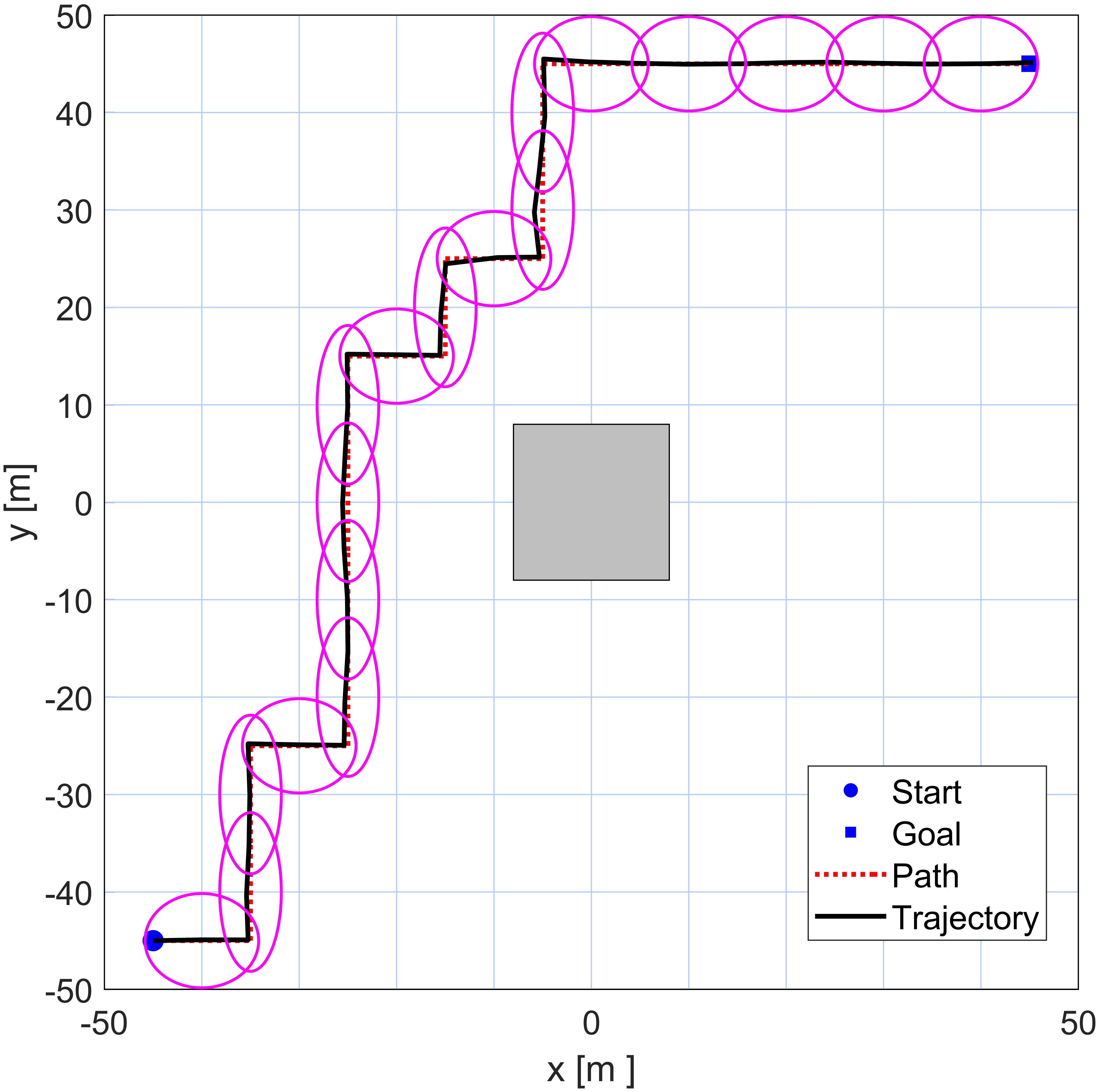}}
    \caption{Single-agent Safe–RRT in the $x$–$y$ plane. 
             (a) The magenta ellipses are contractive certificates covering each edge of the planned path. Overlap of consecutive ellipsoids guarantees safe controller switching and constraint satisfaction while avoiding debris. 
             (b) The red dashed line shows the planned waypoint path, while the black solid line shows the executed closed-loop trajectory under per-edge affine state-feedback gains, verifying feasibility under system dynamics.}
    \label{fig:SA_results}
\end{figure}

Figure~\ref{fig:SA_results} presents a representative single-agent run. In subplot (a), each edge of the Safe–RRT path is certified by a local contractive ellipsoid that renders its two-cell rectangle forward-invariant. The overlap between consecutive ellipsoids ensures safe switching and preserves admissibility. Subplot (b) shows the executed trajectory under the synthesized per-edge controllers, confirming that the system remains within constraints and successfully avoids the debris region.

\subsection{Two-Agent Scenario}
The grid resolution remains $10$\,m, but the map now contains seven
$16$\,m square debris fields centered at
$(-30,40)$, $(-40,-30)$, $(30,30)$, $(40,-20)$, $(-30,10)$,
$(10,-30)$, and the origin. Agent~A travels from $(-45,-45)$\,m to
$(45,45)$\,m, and Agent~B from $(-45,45)$\,m to $(45,-45)$\,m. All
trees expand synchronously on a common time index: in each
global iteration both agents propose one-cell moves for the next
layer, certify them via the SDP, verify overlap with their respective
parent ellipsoids, and then pass space–time checks—no double booking
of a cell at the same layer and no head-on swaps. Proposals that
survive all tests are committed atomically, and the reservation table
for that layer is updated.

\begin{figure}[h!]
    \centering
    \subcaptionbox{Exploration trees at termination (blue = A, red = B).\label{fig:MA_tree}}
        {\includegraphics[width=0.9\linewidth]{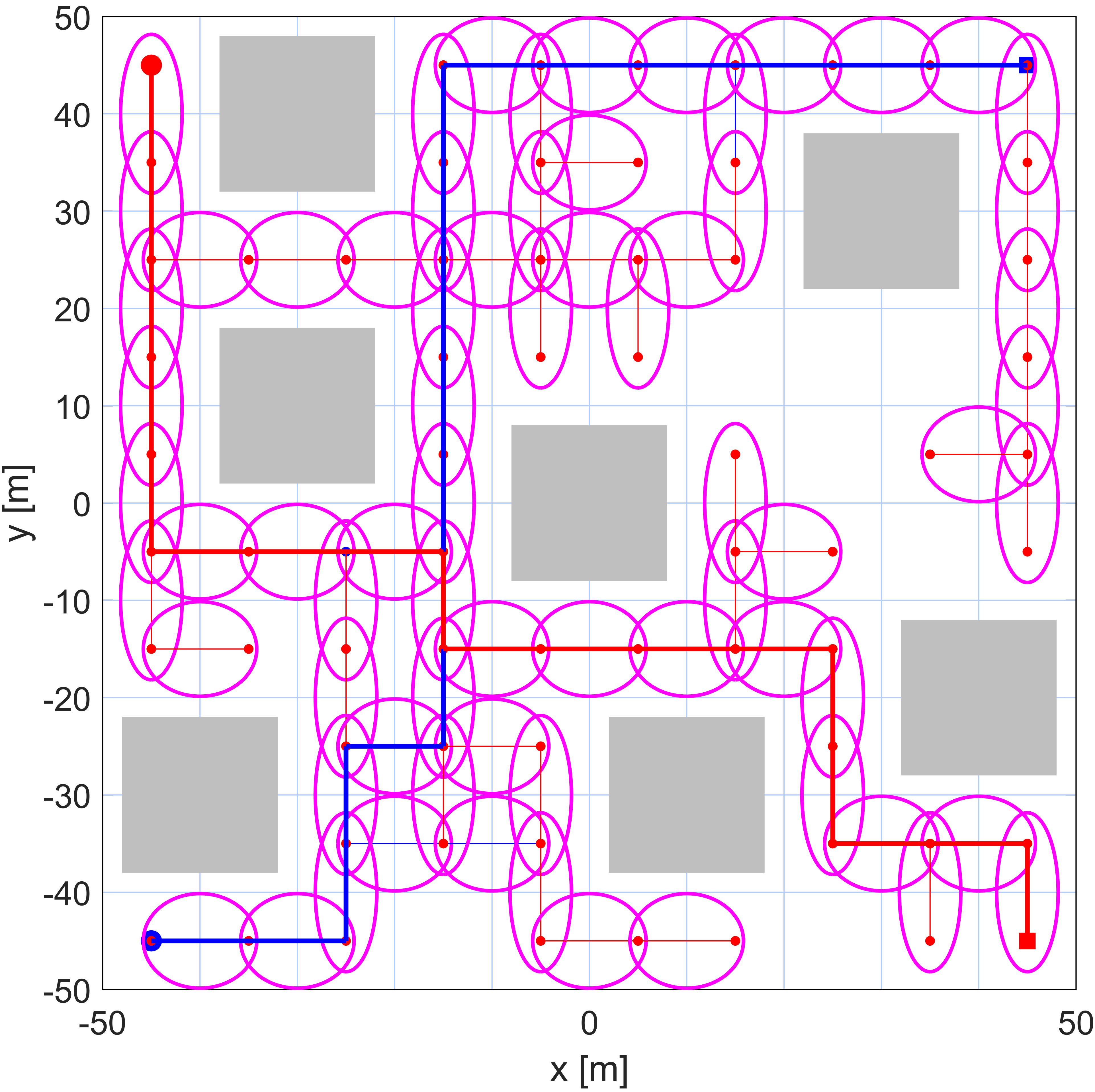}}\par\medskip
    \subcaptionbox{Planned paths with their associated contractive ellipsoids. The blue and red certificates do not overlap in time–space layers, thereby guaranteeing safety and pairwise separation throughout execution.\label{fig:MA_ell}}
        {\includegraphics[width=0.9\linewidth]{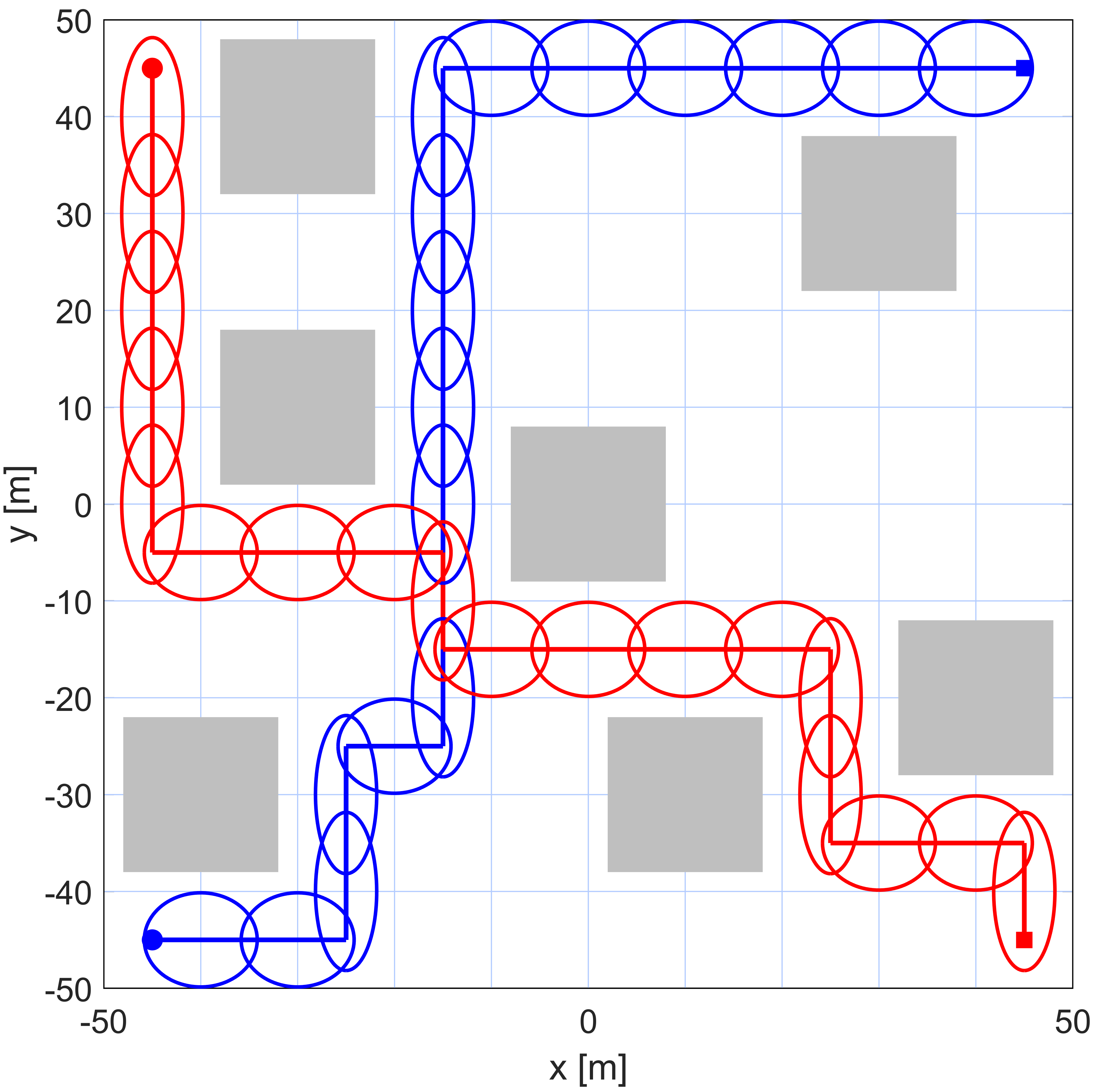}}
    \caption{Two-agent SAFE--MA--RRT: exploration and safety certificates.}
    \label{fig:MA_results_part1}
\end{figure}

\begin{figure}[h!]
    \centering
    \includegraphics[width=0.9\linewidth]{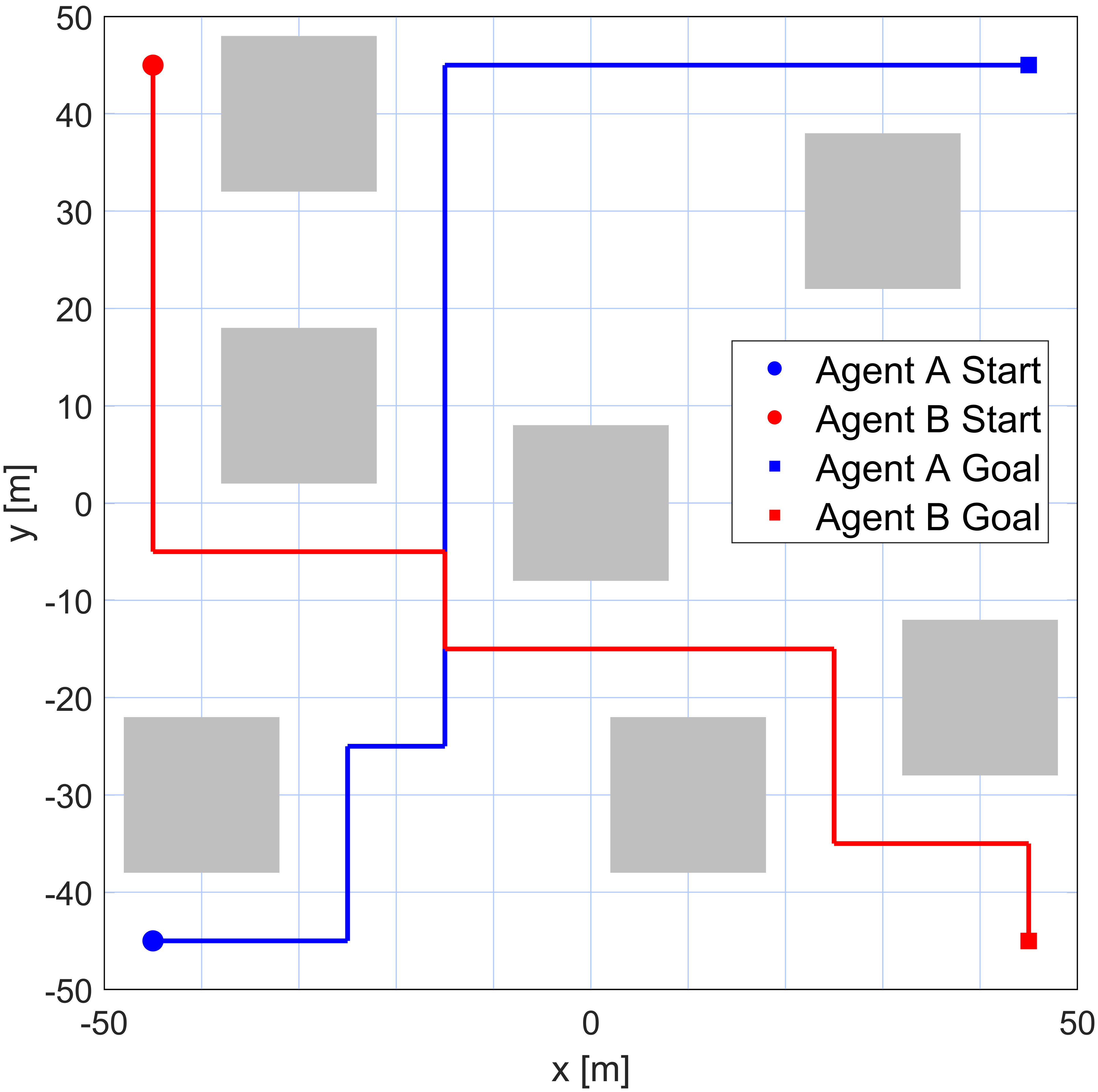}
    \caption{Two-agent SAFE--MA--RRT: planned waypoint paths. Each edge between adjacent cells is certified by a contractive ellipsoid; paths are shown without state traces for clarity.}
    \label{fig:MA_paths_plan_sub}
\end{figure}

\begin{figure}[h!]
    \centering
    \subcaptionbox{SAFE--MA--RRT executed trajectories under the data-driven affine controllers associated with each certified edge.\label{fig:MA_path_exec_safe}}
        {\includegraphics[width=0.9\linewidth]{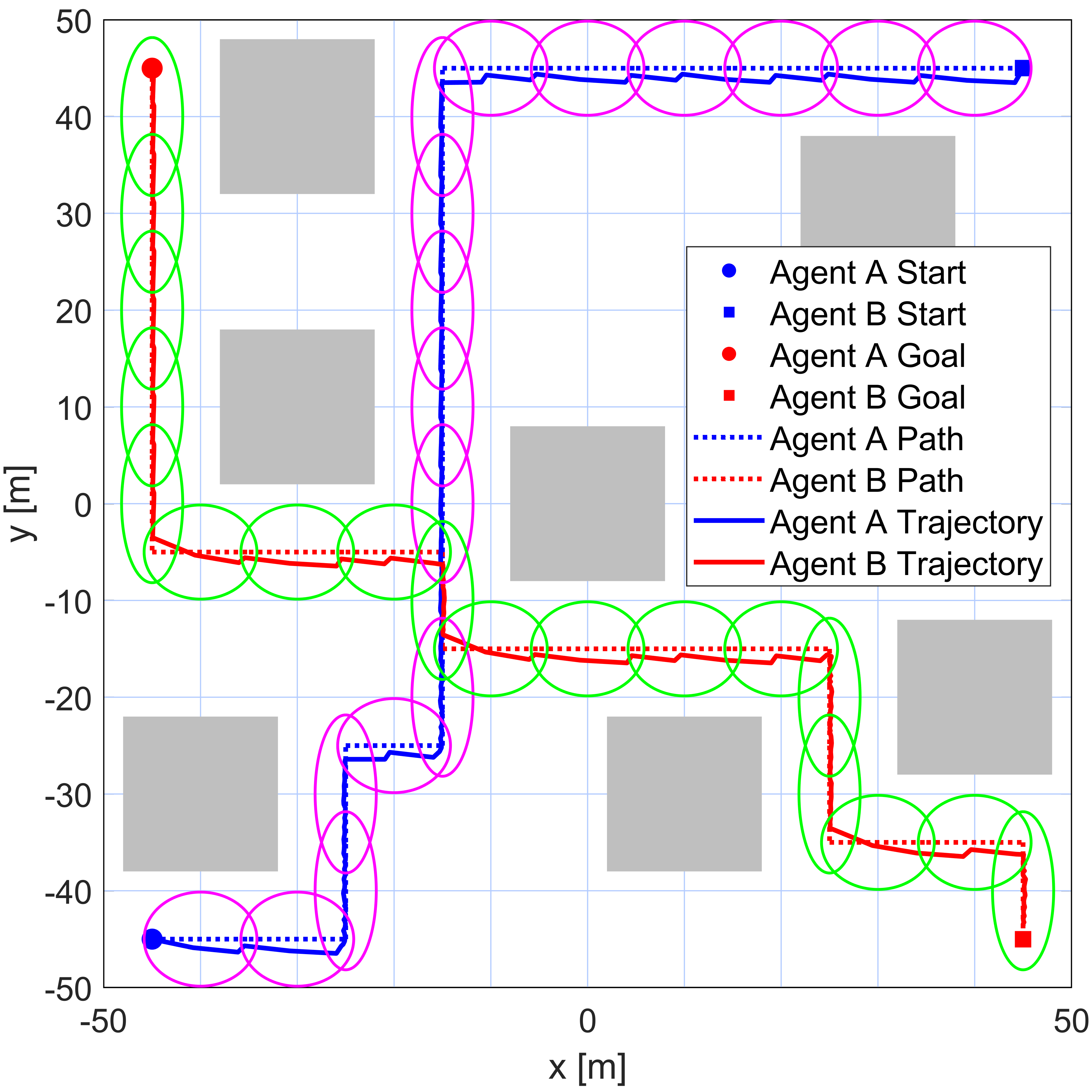}}
    \vskip\baselineskip
    \subcaptionbox{Baseline LQR--RRT executed trajectories without certified safety guarantees.\label{fig:MA_path_exec_lqr}}
        {\includegraphics[width=0.9\linewidth]{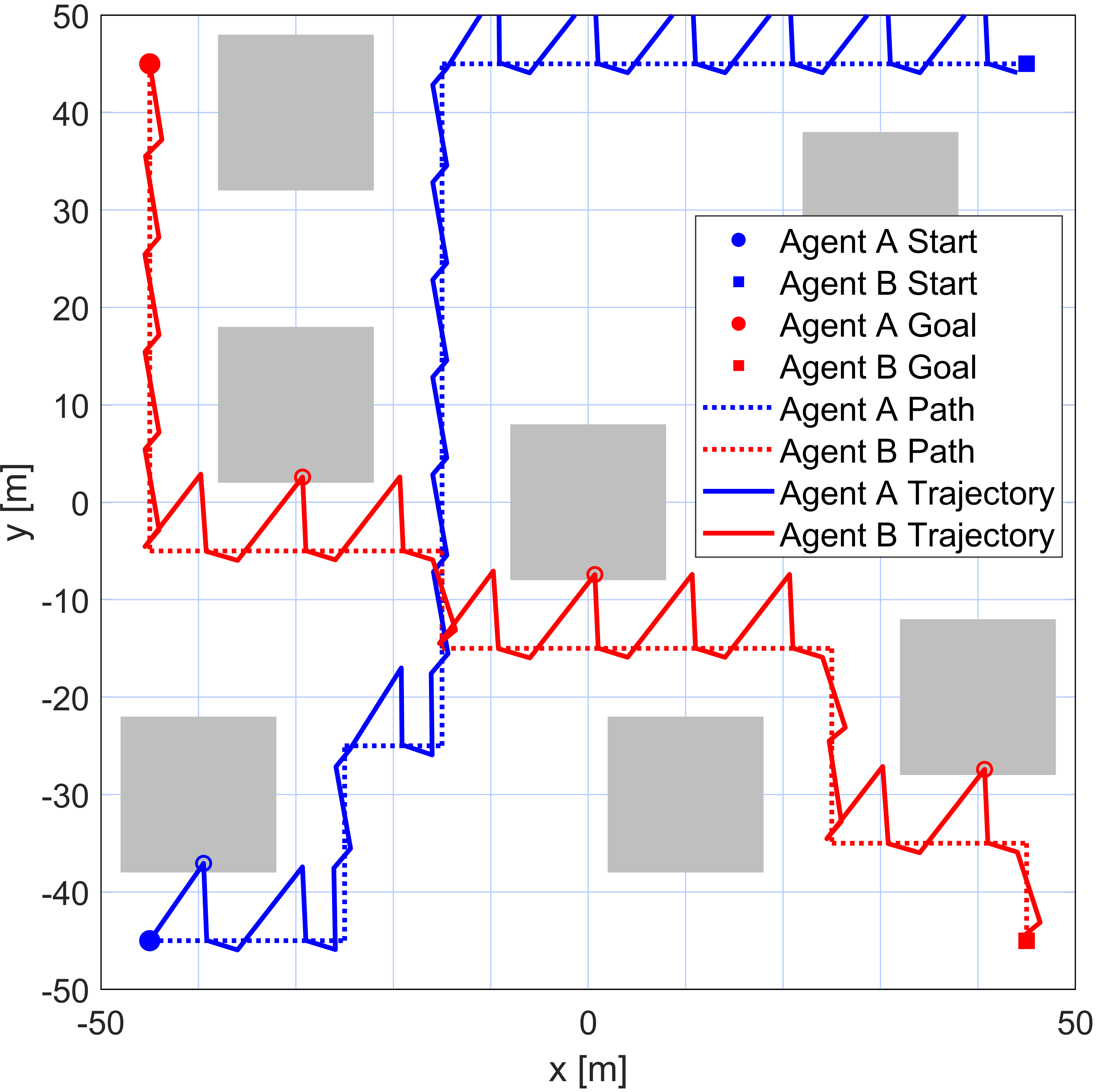}}
    \caption{Two-agent spacecraft rendezvous: comparison of executed closed-loop trajectories. 
             (a) The proposed Safe--RRT ensures trajectories remain within workspace constraints and respect space--time reservations. 
             (b) The baseline LQR--RRT produces feasible paths but lacks certified safety guarantees.}
    \label{fig:MA_paths_exec_comparison}
\end{figure}

Figure~\ref{fig:MA_tree} shows both exploration trees at the moment all
goals are reached. Figure~\ref{fig:MA_ell} overlays every certificate
computed along the final paths; the blue (Agent~A) and red (Agent~B)
ellipsoids are disjoint in space–time, so the agents remain at least
one grid cell apart for all layers. Because each edge has its own gain
that makes the local rectangle invariant, the executed trajectories are
dynamics-feasible and satisfy the state constraints. The clean
paths in Figure~\ref{fig:MA_paths_plan_sub} highlight the coordinated solution
through clutter.

We also implemented the data-driven linear–quadratic regulator
(LQR) method of~\cite{van2020noisy}, using the same path from SAFE--MA--RRT but replacing the per-edge affine controllers with LQR gains computed from data. The controller minimized the standard infinite-horizon quadratic cost, with the state weighting matrix $Q=\mathrm{diag}(1,1,0.1,0.1)$ chosen to penalize deviations in position more strongly than velocity, and 
the input weighting $R=10I$ used to limit aggressive control actions. 
While this LQR design successfully tracked the waypoints, the executed trajectories 
(Fig.~\ref{fig:MA_paths_exec_comparison}b) violated safety constraints: 
Agent~A showed violations in $7$ of $45$ layers ($15.6\%$), and Agent~B in $3$ of $45$ layers ($6.7\%$). 
By contrast, the proposed SAFE--MA--RRT (Fig.~\ref{fig:MA_paths_exec_comparison}a) maintained $0\%$ violations 
for both agents, as also summarized in Table~\ref{tab:violations}.

\begin{table}[h!]
    \centering
    \caption{Safety Violations in Executed Trajectories}
    \label{tab:violations}
    \begin{tabular}{lcc}
        \toprule
        & \textbf{SAFE--MA--RRT} & \textbf{LQR--RRT} \\
        \midrule
        Agent A & $0\%$ & $15.6\%$ \\
        Agent B & $0\%$ & $6.7\%$ \\
        \bottomrule
    \end{tabular}
\end{table}

\section{Conclusion}
A data-driven safe motion planning framework was presented for multi-agent systems with linear dynamics, enabling agents to operate in a shared environment while inter-agent safety was enforced through temporal coordination. Ellipsoidal invariant sets were constructed directly from system trajectory data, removing the need for explicit model identification. A global space–time reservation table was employed to prevent collisions, and each local motion step was certified through semidefinite programs to satisfy state and input constraints. Agent trees were expanded in an interleaved fashion, allowing synchronized yet simultaneous planning across all agents. Simulation results demonstrated that the proposed approach yields coordinated, safe, and dynamically feasible paths for all agents. In future work, the framework will be extended to account for noise by incorporating probabilistic safety guarantees, and to support multi-agent systems with nonlinear dynamics.

\ifCLASSOPTIONcaptionsoff
  \newpage
\fi

\bibliographystyle{IEEEtran}
\bibliography{Refs}

\vfill

\end{document}